\documentclass[conference]{IEEEtran}
\IEEEoverridecommandlockouts
\usepackage{cite}
\usepackage{amsmath,amssymb,amsfonts,amsthm}
\usepackage{algorithmic}
\usepackage{graphicx}
\usepackage{textcomp}
\usepackage{xcolor}
\usepackage{multirow}
\usepackage{epstopdf}
\usepackage{graphicx}
\usepackage{algorithm}
\usepackage{url}
\usepackage{geometry}
\usepackage{subfigure}
\usepackage{flushend}


\geometry{left=1.8cm, right=1.8cm, top=2cm, bottom=2cm}

\def\BibTeX{{\rm B\kern-.05em{\sc i\kern-.025em b}\kern-.08em
    T\kern-.1667em\lower.7ex\hbox{E}\kern-.125emX}}
\begin{document}

\title{An Iterative Scheme for \\
Leverage-based Approximate Aggregation
}

\author{
\IEEEauthorblockN{1\textsuperscript{st} Shanshan Han}
\IEEEauthorblockA{\textit{School of Computer Science} \\
\textit{Harbin Institute of Technology}\\
Harbin, China \\
sshan0731@hotmail.com}
\and
\IEEEauthorblockN{2\textsuperscript{nd} Hongzhi Wang}
\IEEEauthorblockA{\textit{School of Computer Science} \\
\textit{Harbin Institute of Technology}\\
Harbin, China \\
wangzh@hit.edu.cn}
\and
\IEEEauthorblockN{3\textsuperscript{rd} Jialin Wan}
\IEEEauthorblockA{\textit{School of Computer Science} \\
\textit{Harbin Institute of Technology}\\
Harbin, China \\
wanjialinhit@126.com}
\and
\IEEEauthorblockN{4\textsuperscript{th} Jianzhong Li}
\IEEEauthorblockA{\textit{School of Computer Science} \\
\textit{Harbin Institute of Technology}\\
Harbin, China \\
lijzh@hit.edu.cn}

}
\maketitle
\begin{abstract}
The current data explosion poses great challenges to approximate aggregation with high efficiency and accuracy.
To address this problem, we propose a novel approach to calculate the aggregation answers with a high accuracy using only a small portion of the data.
We introduce leverages to reflect individual differences in the data from a statistical perspective. Two kinds of estimators, the leverage-based estimator, and the sketch estimator (a ``rough picture'' of the aggregation answer), are in constraint relations and iteratively improved according to the actual conditions until their difference is below a threshold.
Due to the iteration mechanism and the leverages, our approach achieves a high accuracy.
Moreover, some features, such as not requiring recording the sampled data and easy to extend to various execution modes, such as the online mode, make our approach well suited to deal with big data.
Experiments show that our approach has an extraordinary performance, and when compared with the uniform sampling, our approach can achieve high-quality answers with only 1/3 sample size.
\end{abstract}
\begin{IEEEkeywords}

approximate aggregation, leverage, iteration
\end{IEEEkeywords}

\newtheorem{theorem}{Theorem}
\newtheorem{definition}{Definition}
\newtheorem{example}{Example}
\vspace{-0.5em}
\section{Introduction}\label{sec:introduction}
The development of intelligent devices and informatization has brought about an unprecedented data explosion, which brings great challenges to data aggregation.
When dealing with big data, usually it is impractical to compute an accurate answer by a full scan of the data sets due to the high computation cost, while an approximate aggregation is more economical.
Meanwhile, users today often expect high-quality answers but do not want to wait too long. They also would like the data analysis system to be flexible and easy to extend.
In light of this situation, an efficient, high-precision, and flexible approximate aggregation approach is in great demand.


To effectively execute approximate aggregation on big data and balance the accuracy and efficiency, researchers proposed bi-level sampling~\cite{Haas2004A}, which considers the local variance of the data when generating the sampling rate.
However, it does not consider the individual differences in the data, while data with different features contribute differently to the aggregation answers. For example, in \textsf{SUM} aggregation, some data (outliers) are much too large but count for limited proportions, thus can hardly be sampled.
However, once they are sampled, due to their extremely high values, significant effects occur about the aggregation answers. Under this condition, these data should not be handled identically with others, while neglecting their individual differences produces a loss of accuracy.

To solve this problem, researchers introduced leverages to reflect the different influences of data on the global answers~\cite{Ma2013A}.
The leverage of data is calculated using the data value as well as all the other data. To reflect the individual differences in the data, a biased sampling process is performed, and for each data point, its biased sampling probability is generated using its leverage and the uniform sampling probability.
This technique provides an unbiased estimation of the accurate value~\cite{Ma2013A}. It also considers the individual differences in the data, thus leading to a high accuracy.
However, several drawbacks make it unsuitable to deal with big data.
Most important, this technique requires recording all the data for the leverage of data is calculated based on the individual difference compared to all the other data.
As a result, all the data are involved in calculating the leverages, which would cost much computation time when dealing with big data.

A solution to this problem is to draw uniform and random samples from the data set, calculating the ``leverage-based'' probabilities to re-weight samples in the same way as the biased sampling probabilities, and use the samples and the leverage-based probabilities to generate the final answer.
The expectation of the average of the samples, calculated by accumulating the products of the leverage-based probability and the sample value, is an unbiased estimate of the accurate average of the samples~\cite{Ma2013A}.
Since the distribution of the samples is considered to be the same as the whole data distribution~\cite{Neyman1937Outline}, 
the accurate average of the samples is considered the same as the accurate average of the data, suggesting that the average, calculated with the leverage-based probabilities and the samples, is an unbiased estimate of the accurate average of the data.

However, when dealing with big data, calculating the sample leverages requires recording all the samples, which would decrease efficiency.
A solution is to define the sample leverages according to the current and previous samples while sampling. We can set several variables to record the ``general conditions'' ($e.g.$, average or median) of the previous samples instead of all the samples to calculate the leverage for the current sample, which requires much less storage space.
However, this approach is sensitive to the sampling sequence, and samples with the same value may have different leverages.
For example, suppose the individual difference of a sample is defined as dividing the sample value by the sum of the values of the current sample and all the previous samples (at this time, only the average of the previous samples needs to be recorded). 
Considering a sampling sequence $\{10, 10, 1, 1, 1\}$, the leverages of 10 can be 1 or 0.5, while for the sampling sequence $\{1,1,1,10,10\}$, the leverages of 10 may be $\frac{10}{13}$ or $\frac{10}{23}$.
Certain samples of different sampling sequences may produce different leverages and different aggregation answers, leading to a poor robustness.

Another solution is to calculate the leverages off-line to accelerate the online processing. For example, similar to~\cite{AQUA1999SIG}~\cite{Sidirourgos2011SciBORQ}, we could refer to previous query results or compute summary synopses in advance.
However, the off-line processing may also be impractical, since it is usually too expensive when dealing with big data due to the constrained time and resources. Additionally, they may be less flexible when dealing with queries on new data sets.

Some other drawbacks also make the previous approaches less efficient when dealing with big data.
The degree of the leverage effects, $i.e.$, how much influence the leverages have on the aggregation answers, is fixed in~\cite{Ma2013A}. However, to obtain better results, the actual conditions of the data should be considered to determine whether the leverage effects should be ``strong'' or ``weak''. When a ``weak'' leverage effect is enough, applying a ``strong'' leverage effect leads to poor answers, and it is the same the other way round. Thus, the fixed degree of the leverage effects in~\cite{Ma2013A} would bring about a loss of accuracy to some degree.
Besides, leverages are calculated in a single way in~\cite{Ma2013A}, while data with different features should be assigned with different leverages due to their different contributions to the aggregation answers.
Moreover, to reflect the individual differences in the data, the biased sampling is adopted in~\cite{Ma2013A}, which is much more difficult to implement than uniform sampling, thus may decrease the efficiency when dealing with big data.

\noindent\textbf{Contributions.}
In this paper, we propose a novel leverage-based approximate aggregation approach to overcome the stated limitations, which efficiently computes aggregation answers with a precision assurance. To overcome the limitation and inherit the advantages of uniform sampling, we draw uniform samples, and use leverages to generate probabilities to re-weight the samples to reflect their individual differences.
To overcome the limitation of the traditional simple leverages, we divide the data into regions according to their features and contributions, and assign different leverages to them.
To increase the accuracy, we introduce an iteration scheme of improving two constrained estimators, which intelligently determines the degree of the leverage effects according to the actual conditions.
An objective function is constructed, which makes our approach insensitive to the sampling sequences and unnecessary to record samples.
Our main contributions in this paper are summarized by:

\begin{enumerate}
\vspace{-0.2em}
\item A novel methodology for a high-precision estimate is proposed, which involves generating two estimators using different methods to iteratively process constrained modulations according to the actual conditions of the data.


\item
A sophisticated leverage strategy which considers the nature of data is proposed, in which the data are divided into regions and appropriately handled.

\item An objective function is constructed with the leverages and the samples, which avoids the sensitivity of the sampling sequences as well as storing the samples.

\item We conducted experiments compared with a uniform sampling, and the results show that our approach achieves high-quality answers with only 1/3 of the same sample size.

\item
To the best of our knowledge, iterative leveraging is applied to data management for the first time.
\vspace{-0.2em}
\end{enumerate}

In this paper, we focused on \textsf{AVG} aggregation, as \textsf{AVG} aggregation is one of the most common aggregation operations. Meanwhile, the answer of \textsf{SUM} aggregation can be easily obtained by multiplying the average and the number of data points, which could be easily obtained from meta data or computed according to the data size. Other aggregation functions, such as extreme value aggregation, will be studied in detail in the future.
\\

\vspace{-0.8em}\noindent\textbf{Organization.} We overview our approach in Section~\ref{sec: overview}, and introduce the preprocessing calculation in Section~\ref{sec: preEstimation}. We introduce a sophisticated leverage strategy in Section~\ref{sec: biasofSamples}, and propose different modulation strategies for the iteration scheme according to the actual conditions of the data in Section~\ref{sec: modulationStrategy}. The core algorithm is proposed in Section~\ref{sec: LBIAAlgorithm}, and extensions of our approach are discussed in Section~\ref{sec: extensions}. We present the experimental results in Section~\ref{sec: experiment}, survey related work in Section~\ref{sec: related}, and finally conclude the whole paper in Section~\ref{sec: conclusion}.
\vspace{-0.2em}
\section{Overview}\label{sec: overview}
In this paper, we propose a novel methodology of obtaining high-precision estimates. Based on the methodology, we developed a system to process leverage-based \textsf{AVG} aggregation queries.
\vspace{-0.2em}
\subsection{Methodology}\label{sec: basicidea}
We generated two estimators using different estimation methods and evaluated the bias of the estimators, \emph{i.e.}, relations between the accurate value and the estimators. We then modulate these estimators towards the accurate value according to the bias conditions to obtain the proper answers.

In this chapter, we use normal distributions, the most common distribution which is consistent with most of the actual conditions~\cite{Lyon2014Why}, to illustrate the methodology. We suppose data are randomly distributed, and we draw uniform and random samples from the data set. In the ideal condition, the sample distribution is an unbiased estimation of the data distribution~\cite{Neyman1937Outline}. 
In this condition, the sampled data are also normally distributed in high probability, 
and we could use the symmetry of normal distributions to evaluate the deviations of the estimators according to the actual conditions of the data.
Although the accurate value is unavailable, from the distribution of the sampled data, we could tell whether an estimator is larger or smaller than the accurate answer, and which estimator is closer to the accurate answer (discussed in detail in Section~\ref{sec: modulationStrategy}).
Based on such relations, these two estimators are iteratively ``modulated'' towards the accurate value. The one with more deviation is modulated more in each iteration. When the estimators are approximately equal to each other, they arrive at the accurate answer, and the high-precision answer is obtained.

\begin{figure}
\center
  \includegraphics[width = 9cm]{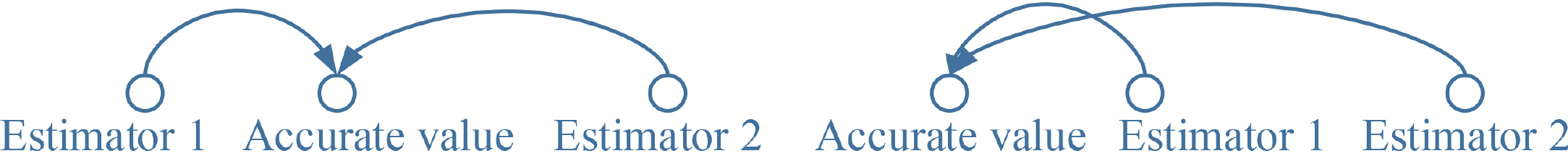}
  \vspace{-0.5em}
  \caption{Modulations of two conditions}
  \label{fig: modulation}
  \vspace{-1.5em}
\end{figure}

There are two cases about the relations between the accurate value and estimators, as shown in Fig.~\ref{fig: modulation}. One is that the accurate value is between the two estimators. The other is that the two estimators are on the same side of the accurate value.
In the first case, the larger of the two estimators is decreased, and the smaller one is increased. In the second case, the estimators are modulated in the same direction.
According to the deviations of the estimators, we tell which estimator is farther from the accurate value, then set different step lengths to obtain a high-precision, unbiased estimate. 

\begin{theorem}\label{th: methodology}
Consider two estimators $est_1$ and $est_2$ with deviations of $\varepsilon$ and $\varepsilon$+$\varepsilon'$ ($\varepsilon$, $\varepsilon'$$>$0) from the accurate value.
If the modulation step lengths of $est_1$ and $est_2$ are $\lambda s$ and $s$ (0$<$$\lambda$$<$1, $s$$>$0), respectively, an unbiased answer can be obtained when $\lambda$=$\varepsilon/(\varepsilon+\varepsilon')$.
\end{theorem}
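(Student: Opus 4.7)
The plan is to translate the informal ``modulation'' mechanism into algebra and then solve for $\lambda$. First I would handle the case depicted on the right of Fig.~\ref{fig: modulation}, where both estimators lie on the same side of the accurate value $v$; without loss of generality say both lie above it, so that $est_1 = v + \varepsilon$ and $est_2 = v + \varepsilon + \varepsilon'$. Both estimators are then pushed downward toward $v$ by their respective step lengths $\lambda s$ and $s$.

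Next I would interpret ``an unbiased answer can be obtained'' as the condition that, after modulating, both estimators coincide with $v$. This yields $est_2 - s = v$, so $s = \varepsilon + \varepsilon'$, and $est_1 - \lambda s = v$, so $\lambda s = \varepsilon$. Dividing the two equalities gives $\lambda = \varepsilon/(\varepsilon + \varepsilon')$, and $0 < \lambda < 1$ follows immediately from $\varepsilon, \varepsilon' > 0$.

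For the ``opposite sides'' case (accurate value sandwiched between the two estimators), I would repeat the argument with one of the modulations being additive rather than subtractive; parameterizing the deviations with signed magnitudes reproduces the same proportional relation $\lambda s : s = \varepsilon : (\varepsilon + \varepsilon')$, so both geometric configurations in Fig.~\ref{fig: modulation} are covered by a single formula.

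The main obstacle is interpretive rather than computational: one must justify that ``unbiased'' here is equivalent to ``both modulated estimators coincide with $v$'' rather than, say, requiring only their average to equal $v$. The discussion preceding the theorem supports this reading by asserting that the estimators ``arrive at the accurate answer'' when they become approximately equal to each other, which forces their common post-modulation value to be $v$. Once this interpretation is settled, the remaining work is a single division, so I expect the proof to be short and unsurprising.
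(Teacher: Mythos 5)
Your proposal is correct and follows essentially the same argument as the paper: set both post-modulation estimators equal to the accurate value, read off $\lambda s=\varepsilon$ and $s=\varepsilon+\varepsilon'$, and divide. The only cosmetic differences are that the paper works the sandwiched case in detail (accumulating displacements over $t$ iteration rounds, so the equations are $\lambda ts=\varepsilon$ and $ts=\varepsilon+\varepsilon'$) and defers the same-side case as ``similar,'' whereas you do the reverse; your reading of ``unbiased'' as both estimators coinciding with the accurate value matches the paper's explicit wording.
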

\begin{proof}
We consider the first case in Fig.~\ref{fig: modulation}, and denote the accurate value as $acc$. Thus $est_1$=$acc$-$\varepsilon$, and $est_2$=$acc$+$\varepsilon$+$\varepsilon'$.
Suppose after $t$ rounds of modulation, $est_1$=$est_2$. Thus $est_1$=$acc$-$\varepsilon$+$\lambda ts$, and  $est_2$=$acc$+$\varepsilon$+$\varepsilon'$$ts$.
When $est_1$ and $est_2$ are both modulated to $acc$, an unbiased estimate is obtained, where $\lambda ts$-$\varepsilon$=0 and $\varepsilon$+$\varepsilon'$-$ts$=0, leading that $\varepsilon/(\varepsilon+\varepsilon')$=$\lambda$.
The proof of the second case is similar.
\end{proof}

\subsection{Introduction of the Leverage}
We introduce leverages to reflect the individual differences of the data, as well as their different contributions to the global answers. Here we use an example to illustrate the benefit.

\noindent\textbf{Example 1. }Consider a data set $\{1,$ 2, 3, 4, 5, 6, 7, 8, 9, 20\} and a sample set $\{2,$ 4, 6, 8, 20\} randomly generated from the data set. The accurate average of all the data is 6.5.
We process \textsf{AVG} aggregation on the sample set. The traditional uniform answer, and the leverage-based answer are computed as follows.\\

\vspace{-0.8em}
\noindent
\underline{\emph{(1) Traditional.}} The uniform answer is generated by equally dividing the sum, and we get an answer of 8.
In the data set, 20 is much larger when compared with other data, which can be defined as ``outliers''. Compared with other ``normal'' data (counted for 9/10), this kind of data count for only 1/10, which is less likely to be sampled than the ``normal'' data. However, once it is sampled, due to the extremely large value, it brings significant influence to the answer, leading to a deviation of the result.

\noindent
\underline{\emph{(2) Leverage-based.}} Considering the individual differences of 20 and other data, in the computation process, we use leverages to re-weighted the samples. In the sample set, each sample counts for 1/5. We regard 20 as the ``outlier'', and other samples as the ``normal sample''. To weaken the influence of the outlier to the global answer, we apply a small leverage, for example, 0.6, to 20, and re-weight sample 20 by 0.6*0.2=0.12. To ensure the sum of the probabilities to be 1, we compute the probability of each normal sample as (1-0.12)/4=0.22, with a leverage of 0.22/0.2=1.1. We then use the leveraged probabilities to compute the average as 0.12*20+0.22*(2+4+6+8)=6.8, which is much closer to the accurate average of 6.5. Due to the leverages, the influence of the sampled outlier is decreased, leading to an increase in the quality of the approximate answer.

\vspace{-0.2em}
\subsection{System Architecture}
According to the methodology above, we adopt two estimators, the sketch estimator ($sketch$), and the leverage-based estimator ($l$-$estimator$).
The sketch estimator, initially generated with a relaxed precision requirement, describes a ``rough picture'' of the aggregation answer.
The leverage-based estimator is calculated with samples and leverages, where the individual differences of samples are considered.

We establish a system to process \textsf{AVG} aggregation using $sketch$ and \emph{l-estimator}. Queries are of this form:
\textsf{SELECT AVG(column) FROM database WHERE desired precision},
where \textsf{desired precision} is the precision requirement indicated by the users.
The flow chart of our system is shown in Fig.~\ref{fig:system}.


When faced with big data, a centralized storage is impractical.
Thus, without any loss of generality, we propose the data to be stored in multiple machines, \emph{i.e.}, blocks. In this condition, to process the aggregation, it is effective to compute on each block, then gather the partial results to generate the final answer.
Considering this, we divide the main functions into three modules, Pre-estimation, Calculation, and Summarization.
The Pre-estimation module calculates the parameters for later computation. The Calculation module processes iterations to obtain partial answers on each block. The Summarization module collects the partial answers to generate the final aggregation answers. We now overview these modules.\\

\vspace{-0.8em}
\noindent\textbf{Pre-estimation module.}
This module calculates the sampling rate and the sketch estimator for later computations.
To satisfy the desired precision indicated by the users, we calculate a sampling rate to draw samples in the blocks. The sketch estimator is then generated with a relaxed precision as an overall picture of the final answer, which is to be later modulated to increase accuracy in blocks in the Calculation module.
Details of this module will be discussed in Section~\ref{sec: preEstimation}.


\noindent\textbf{Calculation module.}
The Calculation module mainly processes core computations on the blocks. A data division criteria (data boundaries) is established according to the distribution feature to divide data into different regions, thus samples with different features can be treated differently. In each block, samples are drawn according to the sampling rate. Based on the data boundaries and the sketch estimator, partial answers are iteratively computed on the blocks.

In each block, once the samples are picked, they fall into specific regions according to the data boundaries.
Only samples in certain regions, which are featured enough to represent the whole distribution, are further considered, since an approximate distribution could be determined from these samples.
Using these samples, the leverage-based estimator is generated to reflect the individual differences in the samples, and the \emph{l-estimator} and $sketch$ are iteratively modulated to generate high-precision answers.

\begin{figure}
\center
\includegraphics[width=8cm]{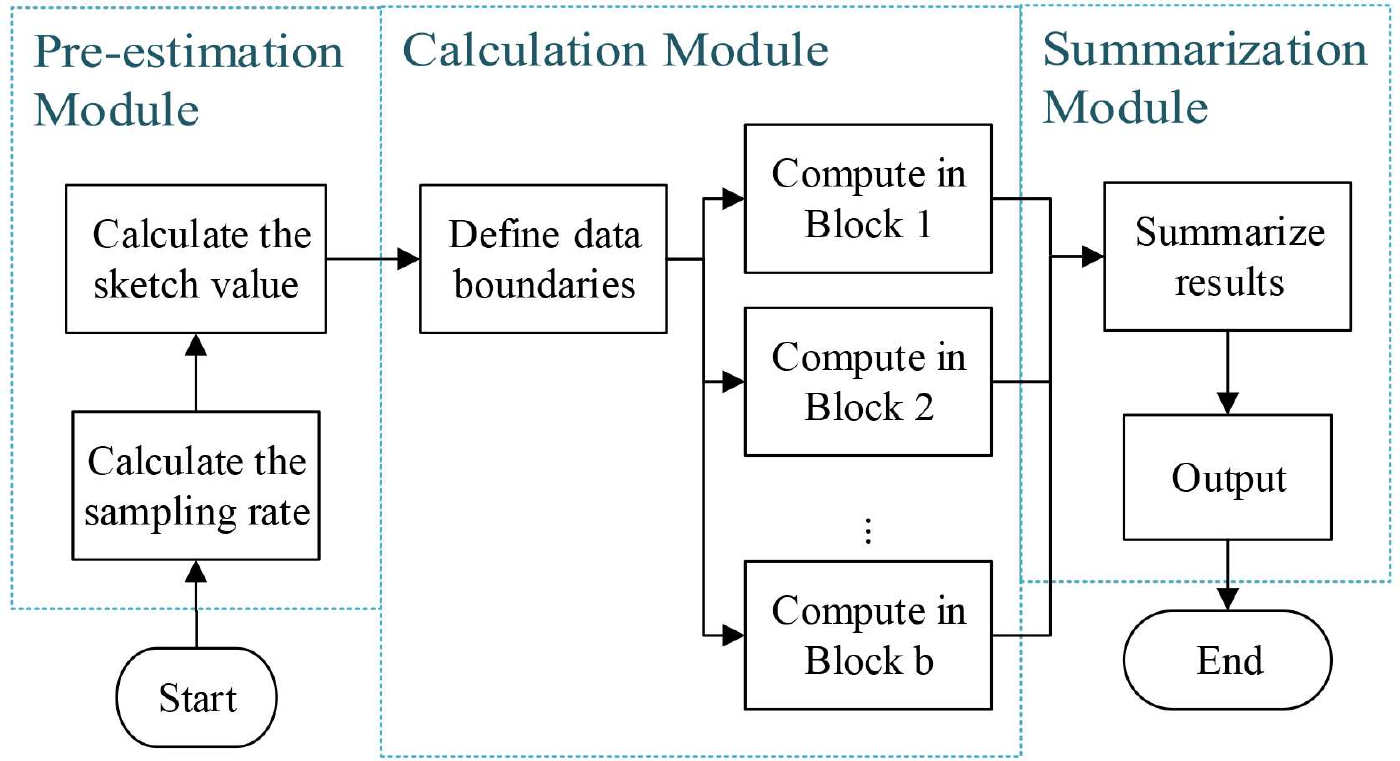}
\caption{System flow chart}
\label{fig:system}
\vspace{-1.5em}
\end{figure}

We discuss the measures for individual differences of the samples in Section~\ref{sec: biasofSamples}, where data boundaries and leverages are explained in detail.  We then illustrate the modulation strategies for the \emph{l-estimator} and $sketch$ in Section~\ref{sec: modulationStrategy}, and finally talk about the core algorithms to compute the proper partial answers in the blocks in Section~\ref{sec: LBIAAlgorithm}.

\noindent\textbf{Summarization module.}
This module collects the partial answers to generate the final answer. We denote the block set and the number of blocks as $B$ and $b$, respectively, and denote the partial answers of block $B_1$, $\cdots$, $B_b$ as $avg_1$, $\cdots$, $avg_b$, respectively.
Since these partial answers represent the average conditions in the blocks, and blocks with more data contribute more to the aggregation, for block $B_j$, the probability of $avg_j$ is set just positively relevant to the block size $|B_j|$. The final answer is thus calculated
as $\sum_{j=1}^{b}avg_j |B_j|/M$, where $M$ is the data size.\\

\vspace{-0.8em}
The main notations in this paper are summarized in Table~\ref{tb: notation}. In this paper, we assume that the blocks provide unbiased samples for their local data. For the ease of illustration, we assume data in the blocks are independent and identically distributed (\emph{i.i.d.}), and extend our approach to the non-\emph{i.i.d.} distributions in Section~\ref{sec: extension-noniid}.

\begin{table}
\vspace{-0.5em}
\centering
\caption{A sumary of the main notations}
\vspace{-0.5em}
\begin{tabular}{ll}
\hline\noalign{\smallskip}
  Symbol & Meaning\\
  \hline
  $e$ & Required precision, indicated by the users.\\
  $r$ & Sampling rate.\\
  $\mu$ & Accurate average aggregation answer. \\
  $\hat{\mu}$ & The value of \emph{l-estimator}.\\
  $\alpha$ & The leverage degree.\\
  $q$ & The leverage allocating parameter.\\
  $\lambda$ & The step length factor.\\
  $D$ & The objective function for iterations. \\
  $M$ & The data size. \\
  $X$ & The set of \textsf{S} samples. $X$=$\{x_i\}_{i=1}^u$. \\
  $Y$ & The set of \textsf{L} samples. $Y$=$\{y_j\}_{j=1}^v$. \\
  $fac$ & The normalization factor for \textsf{S} and \textsf{L} data.\\
  $thr$ & The iteration threshold. \\
  $dev$ & The deviation degree of $sketch_0$: $dev$=$|$\textsf{S}$|$/$|$\textsf{L}$|$.\\
  $p_1$, $p_2$ & Data boundary parameters. \\
  $|$\textsf{S}$|$, $|$\textsf{L}$|$ & The number of data in the \textsf{S} or \textsf{L} region. \\
  $sketch_0$ & The initial value of the sketch estimator.\\
  $\delta\alpha$, $\delta skech$ & The modulation step lengths for $\delta$ and $\alpha$.\\
\noalign{\smallskip}\hline
\end{tabular}\label{tb: notation}
\vspace{-1em}
\end{table}

We mainly discuss the normal distributions, since normal distributions are the most consistent with the actual situations~\cite{Lyon2014Why}. We also provide extensive discussions to show that our approach can also be adapted to other distributions in Section~\ref{sec: otherDistribution}, and experimentally evaluate the performance of our approach on some extreme conditions, such as uniform distribution and exponential distribution, in Section~\ref{sec: expotherDistribution}. Actually, many models assume that data are normally distributed, such as linear regression~\cite{Karlsson2007Introduction}, which assumes that the errors are normally distributed, and non-normal distributions can even be transferred to normal distributions~\cite{datatransformations}.

\vspace{-0.2em}
\section{Pre-estimation}\label{sec: preEstimation}
Pre-estimation module calculates the sampling rate and the sketch estimator for later computation.

\vspace{-0.5em}
\subsection{Sampling Rate}\label{sec: sampleRatioEstimation}
To satisfy the desired precision indicated by the users, the system calculates a sampling rate based on which the blocks draw samples.

For different aggregation tasks, the desired precision $e$ indicated by the users is different, and a proper sampling rate should be calculated accordingly.
We assume that the corresponding sample size is $m$. To calculate $m$, we introduce the confidence interval~\cite{Voit2001The}, which is a precision assurance to confirm that the accurate answer is in it.
\begin{definition}\label{def: confidenceinterval}
Define $\{z_1, z_2, \dots, z_{m}\}$ as a sample set generated from a normal distribution $N(\mu, \sigma^2)$, and $\bar{z}$ is the average of the samples. For confidence $\beta$, the confidence interval of $\mu$ is $(\bar{z}-u\frac{\sigma}{\sqrt{{m}}}, \bar{z}+u\frac{\sigma}{\sqrt{{m}}})$, where $\sigma$ is the standard deviation, and $u$ is a parameter determined by $\beta$.
\end{definition}

According to Neyman's principle \cite{Neyman1937Outline}, the confidence $\beta$ is specified in advance.
In our problem, the confidence interval is determined by the aggregation answer $\bar{z}$ and the desired precision $e$, since we would like the accurate answer in the interval of $(\bar{z}-e, \bar{z}+e)$, which is the confidence interval in our problem. According to Definition~\ref{def: confidenceinterval}, the length of the confidence interval is $2e$, where $e$=$u\frac{\sigma}{\sqrt{{m}}}$. The required sample size $m$ is then obtained: $m$=${u^2 \sigma^2}/{e^2}$, and the sampling rate $r$ is computed as
\vspace{-0.5em}
\begin{equation}\label{eq: sampleratio}
\vspace{-0.2em}
r = \frac{m}{M} = \frac{u^2\sigma^2}{Me^2},
\vspace{-0.2em}
\end{equation}
where $M$ is the number of data points, and $\sigma$ is the overall estimated standard deviation. We assume $M$ is known (actually, $M$ could be easily obtained from the meta data or computed according to the data size). To estimate $\sigma$, a small sample set is used, with a sample size indicated by the system in advance, and the samples are uniformly and randomly picked from each block with a sample size proportional to the block size.
Note that $\sigma$ is subject to error. However, in our approach, $\sigma$ only participates in estimating the sampling rate and establishing the data boundaries (Details are in Section~\ref{sec: leveragestrategy}), and does not evolve in aggregation computation. In this condition, it hardly has any effect on the answers, and we do not need to pay much attention to the accuracy guarantee of $\sigma$.





\vspace{-0.2em}
\subsection{Sketch Estimator}\label{sec: sketchValueEstimation}
The sketch estimator is generated with a pilot sample set as an overall picture of the final answer. It is used to determine the data boundaries and is modulated to increase the precision to obtain the proper aggregation answers in the blocks later in the Calculation module.

Denote the sketch estimator as $sketch$ and its initial value as $sketch_0$.
Note that an arbitrary sample size does not provide any definite precision
assurance. If $sketch_0$ is calculated with an arbitrary sample size, the later modulation of $sketch_0$ would bring uncertainty and precision loss to the final answers.
To ensure accuracy, $sketch_0$ is generated using a relaxed precision $t_e\cdot e$, where $t_e(t_e\textgreater 1)$, determined by the system, is the \emph{relaxed precision parameter}. In this condition, $sketch_0$ is provided with a relaxed confidence interval $(sketch_0-t_e$$e$, $sketch_0$+$t_e$$e)$. We generate $sketch_0$ in the same way as $\sigma$, where uniform samples are picked from each block with the sample size proportional to the block size. Similarly, the sampling rate for $sketch_0$ is obtained according to Eq.~(\ref{eq: sampleratio}). In this way, $sketch_0$ is obtained with a relaxed precision assurance. Such a $sketch_0$ is modulated in each block to increase the accuracy later in the Calculation module.\\

%
\vspace{-1em}

\section{Bias of Samples}\label{sec: biasofSamples}
We consider the bias and individual differences of the samples to increase the accuracy.
To save cost, samples are uniformly drawn. However, data act differently in aggregation, and regarding them uniformly brings a loss of accuracy. Thus, a re-weight processing is required.
In this section, we introduce how our approach reflects individual differences in the samples from the statistical perspective.
A sophisticated leverage strategy is introduced in Section~\ref{sec: leveragestrategy}, which considers the nature of data and divides the data into regions then handles them differently. 
We then illustrate how to use leverages to generate probabilities to reflect the individual differences in the samples in Section~\ref{sec: probabilityGeneration}.
\vspace{-0.2em}
\subsection{Leverage Strategy}\label{sec: leveragestrategy}
To overcome the limitations of the traditional leverages, inspired by \cite{error2014vldb}, we propose a sophisticated leverage strategy which considers the nature of data.
We divide the data into regions according to the data division criteria ($i.e.$, data boundaries), choose regions which are featured enough to represent the whole distribution, and assign various leverages to reflect individual differences in the samples.

\begin{figure}
\begin{minipage}[t]{0.5\linewidth}
\centering
\includegraphics[width=4.5cm]{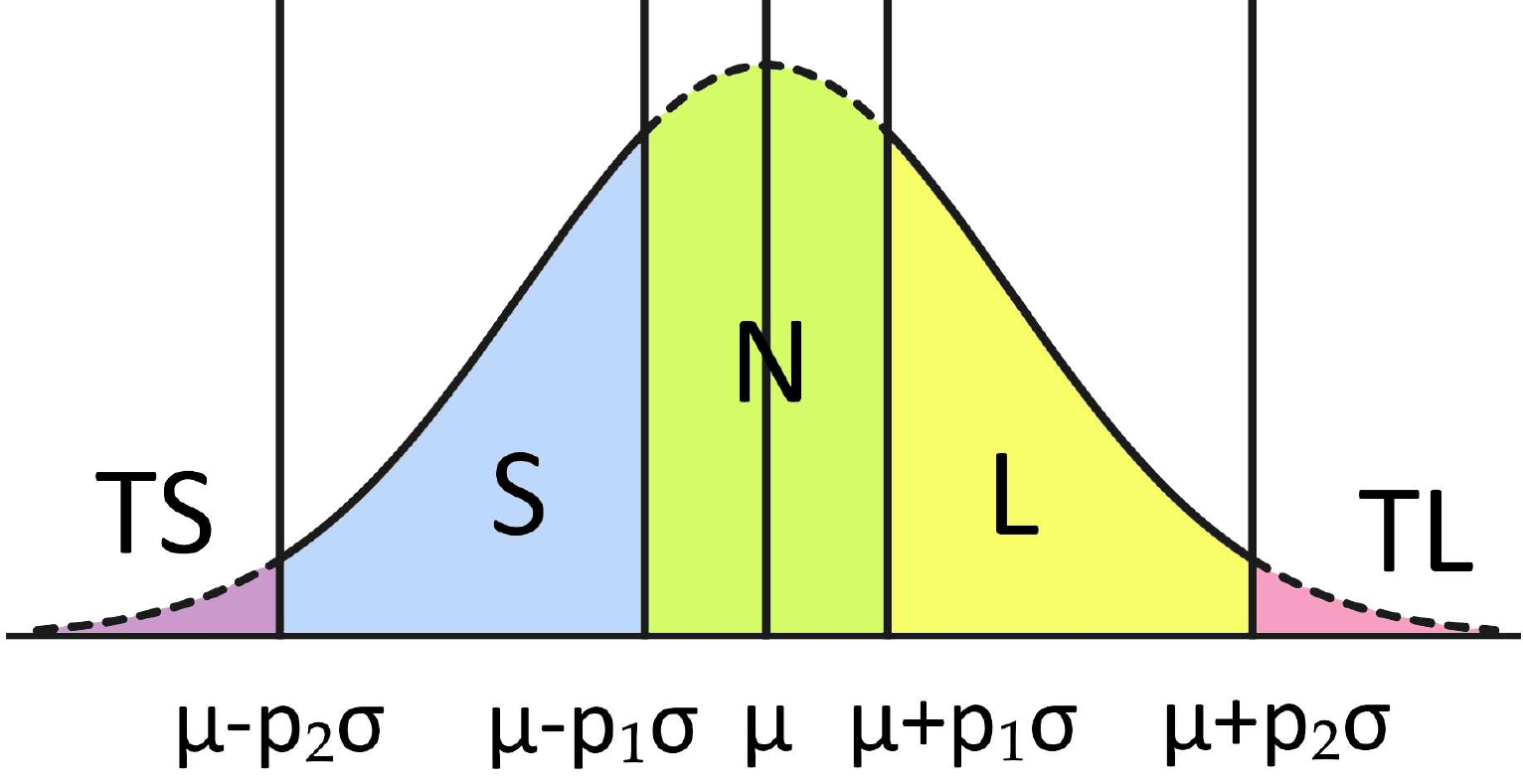}
  \vspace{-0.5em}
  \caption{Data division}\label{fig:datadivision}
  \vspace{-1.2em}
\end{minipage}%
\begin{minipage}[t]{0.5\linewidth}
\centering
\includegraphics[width=4.5cm]{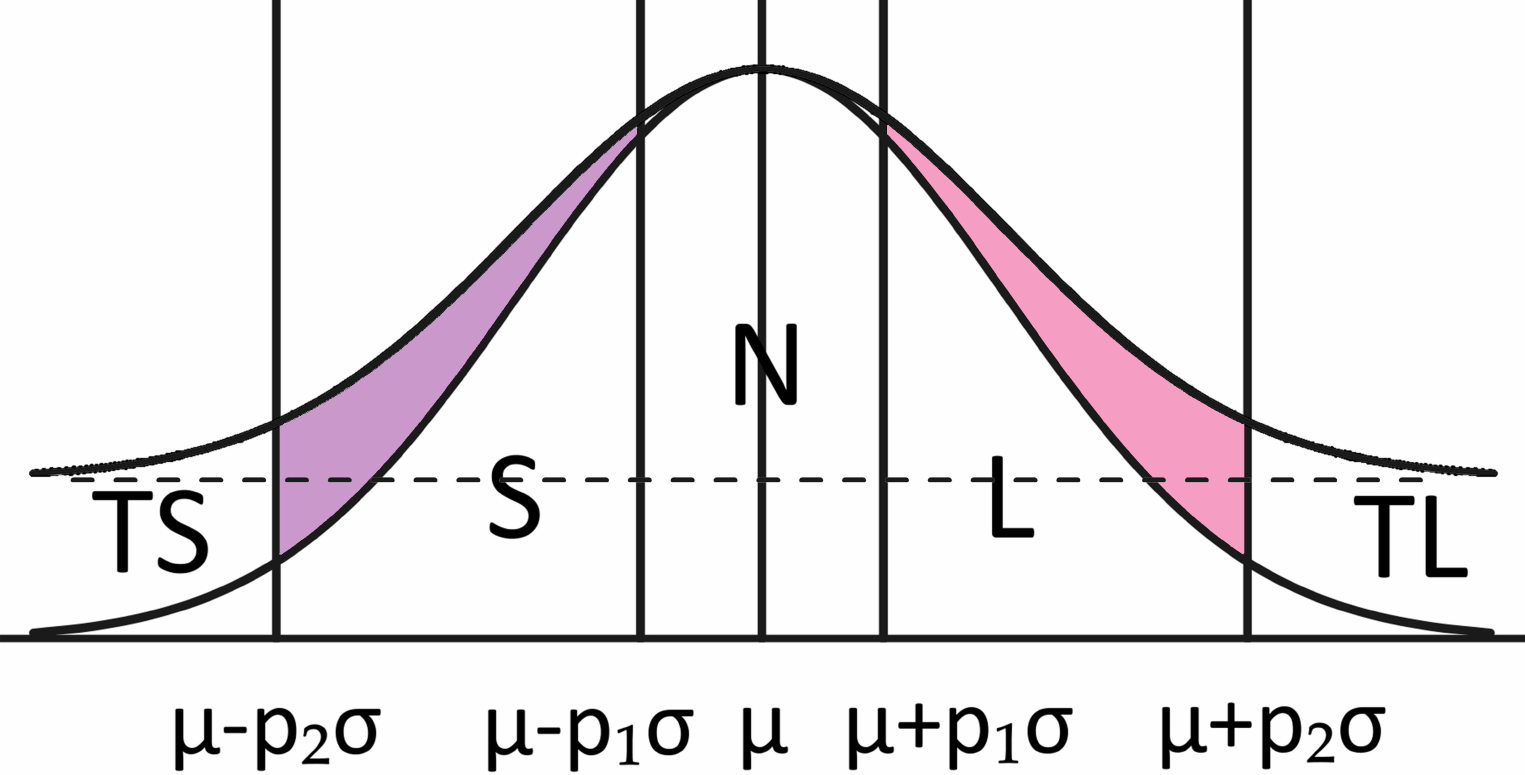}
  \vspace{-0.5em}
  \caption{Data contributions}\label{fig: distance}
  \vspace{-1.2em}
\end{minipage}
\vspace{-0.8em}
\end{figure}

%
%
%

\vspace{0.2em}
\subsubsection{Data Boundaries}\label{sec: dataBoundsDefinition}
To handle data with different features , we use data boundaries to distinguish the data.


Most of the existing approximate aggregation approaches handle samples identically, regardless of the differences among the
samples~\cite{online1997}~\cite{Wu2010Distributed}. 
However, data with different features contribute differently to the global answers in \textsf{AVG} and \textsf{SUM} aggregation, and neglecting the differences brings a loss of accuracy.
For example, in normal distributions, some data are large and can be easily picked, which significantly contributes to the global answers. Some data are even much larger than most of the other data,
but they count for limited proportions to be picked, which can be regarded as large outliers in \textsf{AVG} aggregation. However, once the large outliers are picked, due to their extremely large values, significant effects occur to the aggregation answers.


To treat data with different features, we consider the nature of data and divide the data into regions based on their values and positions in the normal distributions referring to the ``3$\sigma$ rule''~\cite{Friedrich1994The}. 
Since data out of  $(\mu-2\sigma, \mu+2\sigma)$ count for a limited proportion (about 4.6\%~\cite{Friedrich1994The}) and are too far away from the middle axis, which have limited contributions to the \textsf{AVG} aggregation, we regard them as outliers and do not consider the boundaries of $\mu\pm3\sigma$. Therefore, the boundaries of ``3$\sigma$'' divide the data distributions into 5 regions.
We use $sketch_0$ and the standard deviation $\sigma$ calculated in the Pre-estimation module to define the data boundaries.
To control the percentages of data in these regions, we set the data boundary parameters $p_1$ and $p_2$ ($0 \textless p_1 \textless p_2$) to adjust the data boundaries.
In this way, the proportion of data involved in the computation is controlled.

The data boundaries are shown in Fig.~\ref{fig:datadivision}, where the data are divided into the following 5 regions.




\noindent
\underline{\emph{(1) Too small}} (\textbf{TS}). Data in $(-\infty, sketch_0-p_2\sigma]$ are defined as ``too small data''.
Such data have extremely low values and can hardly be sampled due to their extremely low probabilities, thus can be treated as a kind of outlier in \textsf{AVG} aggregation, and their effects can be nearly neglected.

\noindent
\underline{\emph{(2) Small}} (\textbf{S}).
Data in $(sketch_0-p_2\sigma, sketch_0-p_1\sigma)$ are defined as ``small data''.
Such data count for a high proportion and have lower values than most of the others. 

\noindent
\underline{\emph{(3) Normal}} (\textbf{N}).
Data in $[sketch_0-p_1\sigma, sketch_0+p_1\sigma]$ are defined as ``normal data''. These data are symmetrical around the middle axis in the distribution and have higher probabilities to be sampled than data in other regions. 

\noindent
\underline{\emph{(4) Large}} (\textbf{L}).
Data in $(sketch_0 +p_1\sigma, sketch_0+p_2\sigma)$ are defined as ``large data''. Such data have higher values than most of the others
and count for a high proportion, thus significantly contribute to \textsf{AVG} aggregation.

\noindent
\underline{\emph{(5) Too large}} (\textbf{TL}).
Data in $[sketch_0+p_2\sigma,+\infty)$ are defined as ``too large data'', which have extremely high values but can hardly be sampled due to the extremely low probabilities. Thus in \textsf{AVG} aggregation they can also be regarded as a kind of outlier. However, different from the \textsf{TS} data, once such data are sampled, a significant influence might occur to the aggregation answers due to their extremely high values. Thus, in \textsf{AVG} aggregation, such a significant influence should be considered to be eliminated or properly handled.


\vspace{0.2em}
\subsubsection{Leverage Assignment}\label{sec: leverageAssignment}
Due to the different contributions of the data in \textsf{AVG} aggregation, we use different leverages to reflect the differences.
We use data in \textsf{S} and \textsf{L} to represent the distributions and directly discard the other data, since the \textsf{S} and \textsf{L} data contribute much to the \textsf{AVG} aggregation, and the shape of the distributions can even be approximately predicted from the \textsf{S} and \textsf{L} regions. As shown in Fig.~\ref{fig:datadivision}, \textsf{S} and \textsf{L} are symmetric in their distribution, approaching the middle axis from the left and right sides, respectively, and they both account for high proportions. Meanwhile, the parameters of the distributions ($\mu$ and $\sigma$) are included in the shapes of the \textsf{S} and \textsf{L} regions, and other regions can even be approximately speculated with \textsf{S} and \textsf{L}, as the dotted line in Fig.~\ref{fig:datadivision}.

In practice, the symmetry of \textsf{S} and \textsf{L} data in the normal distribution model can be used in estimating the data distribution, for normal distributions are the most consistent with the actual conditions~\cite{Lyon2014Why}. Even if an actual data is not normally distributed, the distribution is usually similar to normal distributions or can be generated by superimposing several normal distributions. Such actual data sets are usually symmetric around the average axis, and we could use such symmetry to process aggregation computing to increase both accuracy and efficiency. 

Due to the different contributions of the samples in different regions, we assign different leverages to the \textsf{S} and \textsf{L} data.
We assign values farther from the middle axis with greater leverages. The reason is that, although they have less probabilities, they contribute more to the shapes of the normal distributions when considering the formation of the normal distributions. As shown in Fig.~\ref{fig: distance}, values farther from the middle axis have more reflection on whether a normal distribution is ``short and fat'' or ``tall and thin'', and such information describes the distributions.
Data farther from the middle axis provide more information about the shapes of the normal distributions. Thus, larger leverages are assigned to them, and smaller leverages are assigned to the closer ones.

Considering a sample set $A$=$\{a_i\}_{i=1}^m$, for each $a_i$, we introduce its \emph{deviation factor} $h_i$ to calculate its leverage.
Commonly, score $h_i$ is used to define whether the data are outliers~\cite{Hoaglin1977The}~\cite{Velleman1981Efficient}~\cite{Weisberg1986Influential}. Inspired by \cite{Ma2013A}, we use it to calculate the leverages. For sample $a_i$, $h_{i}$=${a_i^2}/{\sum_{j=1}^m a_j^2}$. Obviously, for positive values~\footnote{For the ease of discussion, we assume that all the data are positive. For aggregation with negative data, we translate the distribution along the x axis by the distance of $d$ to make all the data positive to process the computation, and then move back the answer by the distance of $d$ to generate the final answer.
}, $h_{i}$ is positively correlating to the values. 
For the \textsf{S} and \textsf{L} data, we assign larger leverages to the samples farther from the middle axis. Considering this, for data $a_i$, if it is \textsf{S}, its leverage score is $1-h_{i}$; if it is \textsf{L}, the leverage score is $h_{i}$.
%

Theoretically, the estimator cannot achieve the same accuracy guarantee when compared with uniform sampling, for only the \textsf{S} and \textsf{L} samples are participated in computation.
However, the precision loss is not that significant, since the chosen samples can effectively represent the whole distribution.
Meanwhile, proper leverages are assigned to the \textsf{S} and \textsf{L} samples to reflect the individual differences, and the accuracy is even increased.
Furthermore, a much better accuracy is achieved in practice.
Compared with uniform sampling, our approach achieves high-quality answers using only $\frac{1}{3}$ sample size, and actually, only the \textsf{S} and \textsf{L} samples in the 1/3 samples are participated in computation. Details are in Section~\ref{sec: experiment4Lev}.

In our approach, to save computation time, only the \textsf{S} and \textsf{L} data involve calculation; to increase the precision, samples are assigned with different leverages based on their different contributions in the aggregation.
\vspace{0.2em}
\subsubsection{Leverage Normalization}\label{sec: scale}
Even though we assign different leverages to the \textsf{S} and \textsf{L} data to reflect their individual differences, since such leverages do not satisfy some constraints inherited from the probability calculation, we could not directly use the leverage scores in the probability generation, and a normalization for the leverages is required based on these constraints.  

The following theorem describes \textbf{Constraint 1}, which is inherited from the probability generation.
\begin{theorem} \label{th: levsum}
The sum of leverages equals to 1.
\end{theorem}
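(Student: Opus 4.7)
The approach is to treat the claim as a normalization requirement inherited from the way the leverages are ultimately used to form biased sampling probabilities. The $l$-estimator combines the retained \textsf{S} and \textsf{L} samples as a leverage-weighted sum, so consistency of that weighted sum with the underlying mean forces the weights to sum to one; the theorem is then essentially a statement about how the normalization in Section~\ref{sec: scale} must be chosen.

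First, I would recall the probability picture laid out in Example~1: each leverage $\ell_i$ multiplies the uniform sampling probability to yield a biased probability $p_i$, and $\sum_i p_i = 1$ is required for $\{p_i\}$ to constitute a proper distribution over the considered samples. I would make explicit the convention (already implicit in the statement of the theorem) under which the uniform factor is absorbed into the leverage itself, so that the probability-generation constraint becomes exactly $\sum_i \ell_i = 1$. Stating this convention up front avoids ambiguity between ``sum equals $m$'' and ``sum equals $1$'' depending on whether one folds the $1/m$ into $\ell_i$.

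Next, because the raw scores $1-h_i$ (for \textsf{S} samples) and $h_i$ (for \textsf{L} samples) introduced in Section~\ref{sec: leverageAssignment} carry no a priori guarantee of summing to one, the normalization step is what enforces the constraint. Writing $\ell_i^{\mathrm{raw}}$ for the raw score and setting $\ell_i = \ell_i^{\mathrm{raw}} / \sum_{j} \ell_j^{\mathrm{raw}}$, a one-line computation gives $\sum_i \ell_i = 1$. The proof then reduces to pointing out that this is the unique rescaling of the raw scores compatible with the probability-generation constraint, which is precisely why the normalization is performed in that way.

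The main obstacle is expository rather than technical: the identity is almost tautological once the normalization is fixed, so the work lies in motivating why sum-to-one is the correct constraint rather than making it appear as an arbitrary definition. I would therefore lead with the probability interpretation from Example~1, derive the constraint from the demand that the leverage-weighted average reproduces any constant data value, and only then invoke the explicit normalization of $\ell_i^{\mathrm{raw}}$ to close the argument.
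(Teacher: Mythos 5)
Your proposal reaches the right conclusion, but it rests on the wrong functional form for the probabilities, and as a result it turns a derived constraint into a convention. You take the multiplicative picture from Example~1 in Section~\ref{sec: basicidea}, $p_i = \ell_i \cdot unif_i$, under which $\sum_i p_i = 1$ forces $\sum_i \ell_i = m$, and you then propose to ``absorb the uniform factor into the leverage'' so that the sum becomes $1$. That is a redefinition, not a proof --- which is why you find the identity ``almost tautological.'' The paper's actual probability generation (Eq.~(\ref{eq: probability}) in Section~\ref{sec: probabilityGeneration}) is the affine combination $prob_i = \alpha\, lev_i + (1-\alpha)/m$ with $\alpha \in (0,1)$, and the paper's proof is the one-line computation $\sum_i prob_i = \alpha \sum_i lev_i + (1-\alpha) = 1$, which forces $\sum_i lev_i = 1$ with no convention to choose. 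Under that form the theorem is a genuine consequence of $\{prob_i\}$ being a probability distribution, not a normalization choice.

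A secondary issue: you place the burden of the proof on the normalization $\ell_i = \ell_i^{\mathrm{raw}}/\sum_j \ell_j^{\mathrm{raw}}$, i.e., you argue the normalization \emph{makes} the sum equal $1$. The paper's logic runs the other way: Theorem~\ref{th: levsum} establishes the constraint that any admissible leverages must satisfy, and the normalization procedure of Section~\ref{sec: scale} (which is per-region, governed additionally by Constraint~2 and the parameter $q$, not a single global rescaling) is then designed to satisfy it. To repair your argument, replace the multiplicative re-weighting with Eq.~(\ref{eq: probability}) and the rest collapses to the paper's computation.
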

\noindent\begin{proof}
For each $a_i$ in the sample set $A$=$\{a_i\}_{i=1}^{m}$, its probability is of this form (details are discussed in Section~\ref{sec: probabilityGeneration}): $prob_i$=$\alpha lev_i$+$(1$-$\alpha)/m$ $(0$$<$$\alpha$$<$$1)$, where $lev_i$ is the leverage, and $1/m$ is the uniform sampling probability.
When accumulating the probabilities of all the samples, $\sum prob_i$=$1$, that is, $\sum$$prob_i$=$\alpha$$\sum$$lev_i$+$\sum$$(1$-$\alpha)/m$=$\alpha\sum$$lev_i$+$1$-$\alpha$=$1$, leading that $\sum$$lev_i$=$1$.
\end{proof}

However, according to THEOREM~\ref{th: levsum} we could not obtain the concrete leverage sums of the \textsf{S} and \textsf{L} data, for their ratio is not obtained. Thus, we propose \textbf{Constraint 2}.

\noindent\textbf{Constraint 2: }The leverage sum of the samples in a specified region is proportional to the number of samples in it.

We establish this constraint according to the following consideration.
Data boundaries are established using $sketch_0$, the initial value of the sketch estimator. The deviation of $sketch_0$ leads to a difference in $|$\textsf{S}$|$ and $|$\textsf{L}$|$. From Fig.~\ref{fig: bias}, we observe that the accurate average value $\mu$ is closer to the region with more data.
Thus, a larger sum of leverages is desired to the region with more data, and we set it proportional to the number of samples in that region. As a result, the leverage-based estimator will be closer to the region with more data, thus it will be much closer to the accurate answer.

According to these discussions, the leverage normalization is performed as follows:
\begin{itemize}
\item \underline{Step 1:} Leverage sum calculation. Get the sum of the leverage scores of the \textsf{S} and \textsf{L} data.
\item \underline{Step 2:} Theoretical sum calculation. Calculate the theoretical sum of leverages for the \textsf{S} and \textsf{L} data based on the two constraints we proposed.
\item \underline{Step 3:} Normalization factor calculation. Divide the sum of the leverage scores by the theoretical sum of the leverages to calculate the \emph{normalization factors} $fac$ for \textsf{S} and \textsf{L}.
\item \underline{Step 4:} Leverage normalization. For each \textsf{S} and \textsf{L} sample, divide its leverage by the corresponding normalization factor $fac$.
\end{itemize}

With the normalized leverages, the probabilities are generated to reflect the individual differences of the samples.
\vspace{0.2em}
\subsubsection{Sensitivity of $sketch_0$}\label{sec: biasDegreeEvaluation}%
Based on the previous discussions, we note that $sketch_0$ is important for the aggregation answers, for the data boundaries are established with $sketch_0$, which directly influences the classification of samples.
A bad $sketch_0$ may lead to a large difference between $|\textsf{S}|$ and $|\textsf{L}|$ and a large difference between the allocated sums of the leverages for the \textsf{S} and \textsf{L} samples. In this condition, the leverage effects of the region with more samples are too strong, leading to an over-modulation of the leverage effects over the aggregation answers.

A severe deviation of $sketch_0$ may happen due to unbalanced sampling.
Meanwhile, a pilot sample set is drawn to calculate $sketch_0$ in the Pre-estimation module, where uniform samples may have a significant influence on $sketch_0$. In this condition, if an outlier is picked, a significant deviation may happen to $sketch_0$.

To overcome the limitation of the sensitivity of $sketch_0$, we introduce the \emph{deviation degree}, denoted as $dev$, to evaluate the deviation of $sketch_0$, and introduce the \emph{leverage allocating} parameter $q$ to balance such a deviation by controlling the allocated sum of the leverages of the \textsf{S} and \textsf{L} samples.
We calculate $dev$ as $dev$ = $\frac{|\textsf{S}|}{|\textsf{L}|}$, and $dev$ within a system-specified range, \emph{e.g.}, $(0.99, 1.01)$,  indicates approximately no much deviation of $sketch_0$, while $dev$ out of the range denotes the deviation.

When an obvious deviation exists, \emph{e.g.}, $dev$ $\textless$ $0.94$, or $dev$ $\textgreater$ $1.06$, the leverage effect is too strong.
To weaken it, we use the leverage allocating parameter $q$ to control the allocated sum of the leverages (denoted by $levSum$) of \textsf{S} and \textsf{L} in the leverage normalization: $\frac{levSum_S}{levSum_L} = q\frac{u}{v}$.
We determine $q$ according to the actual conditions.
Generally, $q$ is set to 1. When an obvious deviation of $sketch_0$ occurs, we use a positive value $q'$ to generate $q$.
If $|\textsf{S}|$ $\textgreater$ $|\textsf{L}|$, we decrease the allocated sum of the leverages to the \textsf{S} data, and $q$ = $\frac{1}{q'}$; otherwise, $q$ = $q'$.
A large $q'$ is required, since the leverage tuning is subtle. Meanwhile, $q'$ should not be too large, since a too large $q'$ leads to a large sum of leverages allocated to the region with less data, which decreases the accuracy.
Actually, due to the confidence assurance of $sketch_0$, the difference between $|\textsf{\textsf{S}}|$ and $|\textsf{\textsf{L}}|$ is limited, leading to a not too large $q'$. Considering these, we vary $q'$ in $[5,10]$ in practice according to the deviation conditions of $sketch_0$.
In this way, we shrink the leverage effects of the region with more data to balance the too strong leverage effects.
Based on such a mechanism, our approach can detect and reduce the obvious deviation of $sketch_0$. Thus, the limitation of the sensitivity of $sketch_0$ is overcome.

An effective leverage strategy should understand the nature of data, divide the data into regions, and handle them differently.
In this paper, we propose a sophisticated leverage strategy. Various leverages are assigned to samples to reflect their individual differences. As a result, high-quality answers can be obtained with a small sample size.
\vspace{-0.2em}
\subsection{Probabilities Generation}\label{sec: probabilityGeneration}
In our approach, samples are uniformly picked. Inspired by the SLEV algorithm~\cite{Ma2013A}, we use leverages to re-weight samples to reflect their individual differences. In this subsection, we discuss how to generate re-weighted probabilities with the normalized leverages and the uniform sampling probabilities.

For sample $a_i$ in the sample set $A$ = $\{a_i\}_{i=1}^m$, let $lev_i$ denote the leverage of $a_i$, and let $unif_i$ denote the uniform distribution sampling probability (\emph{i.e.}, $unif_i$ = $1/m$, for all $i\in[m]$).
The re-weighted probability of $a_i$ is of the form
\vspace{-0.5em}
\begin{equation}\label{eq: probability}
prob_i = \alpha lev_i + (1 - \alpha)unif_i, \alpha \in (0,1),
\vspace{-0.5em}
\end{equation}
where $\alpha$ is the \emph{leverage degree}, which indicates the intensity of the leverage effect. 
The aggregation answer is then obtained by accumulating the product of probabilities and values, $\sum_{i=1}^m$ $prob_i\cdot a_i$.

\begin{table}
\vspace{-0.5em}
\center
\caption{Intermediate results of $l$-$estimator$ in Example 1}
\begin{tabular}{llllll}
\hline\noalign{\smallskip}
\textsf{Region}&\textsf{Val}&\textsf{OriLev}&\textsf{Fac}&\textsf{NorLev}&\textsf{Prob}\\ \hline\noalign{\smallskip}
\multirow{2}{*}{\textsf{S}}
&4
& 89/105
& \multirow{2}{*}{$169/70$}
& ${178}/{507}$
& $\frac{178}{507}\alpha+\frac{1-\alpha}{3}$\\
&5 & ${16}/{21}$  &
& ${160}/{507}$
& $\frac{160}{507}\alpha+\frac{1-\alpha}{3}$\\
\textsf{L}&8 & ${64}/{105}$ & $64/35$ & ${1}/{3}$ & $\frac{1}{3}\alpha+\frac{1-\alpha}{3}$\\ \hline
\end{tabular}\label{tb: example}
\vspace{-1em}
\end{table}

Here we illustrate the leverage effects and the process of leverage-based aggregation by the following example.\\

\vspace{-0.8em}
\noindent\textbf{Example 1. }Consider a data set $\{1,$ 2, 2, 3, 4, 4, 5, 5, 6, 6, 7, 8, 9, 10, 15\} and a sample set $\{2,$ 3, 4, 5, 6, 7, 8, 15\} randomly generated from the data set. The accurate average of all the data is 5.8.
We now generate the traditional uniform, and the leverage-based aggregation answers.\\

\vspace{-0.8em}
\noindent
\underline{\emph{(1) Traditional.}} The aggregation answer is generated by equally dividing the sum, and we get an answer of 6.25. The value 15 participates in the computation, which produces a deviation of the result due to its extremely large value.

\noindent
\underline{\emph{(2) Leverage-based.}} Suppose $sketch_0$ is 6.2, $p_1\sigma$=1, and $p_2\sigma$=3. Thus, the range of the \textsf{S} data is $(3.2, 5.2)$, and the range of the \textsf{L} data is $(7.2, 9.2)$.
According to our principled leverage approach, we know that only 4, 5, and 8 participate in the computation, where 4 and 5 fall in the \textsf{S} region, and 8 is in the \textsf{L} region.
The calculation processes are recorded in Table~\ref{tb: example}.
To generate the leverage-based probabilities of the samples, we first calculate the original leverages (\textsf{OriLev} in Table~\ref{tb: example}), then calculate the normalization factors (\textsf{Fac}) for the \textsf{S} and \textsf{L} data.
After that, we obtain the normalized leverages (\textsf{NorLev}). Finally, the probabilities (\textsf{Prob}) are generated with leverages, $\alpha$, and the uniform sampling probabilities.
Suppose $\alpha$ = $0.1$. By accumulating the products of the values and probabilities, we obtain the aggregation answer of 5.67. Due to the leverage effects, this answer is much closer to the accurate average of 5.8.

In this paper, we introduce leverages to reflect the individual differences in the samples to overcome the limitation of the uniform sampling. The leverage effect is controlled through the modulation of the leverage degree $\alpha$, which is crucial to the quality of the answers.
Using a fixed $\alpha$ means no modulation ability over the leverage effects, and a bad $\alpha$ leads to a low accuracy.
For example, if the aggregation answer calculated with uniform sampling probabilities are very close to $\mu$, only slight leverage effects are required over the aggregation answer, since only a little modulation is required. At this time, a large $\alpha$ produces an inaccuracy to the answer.

\begin{figure}
  \center
  \includegraphics[width=8cm]{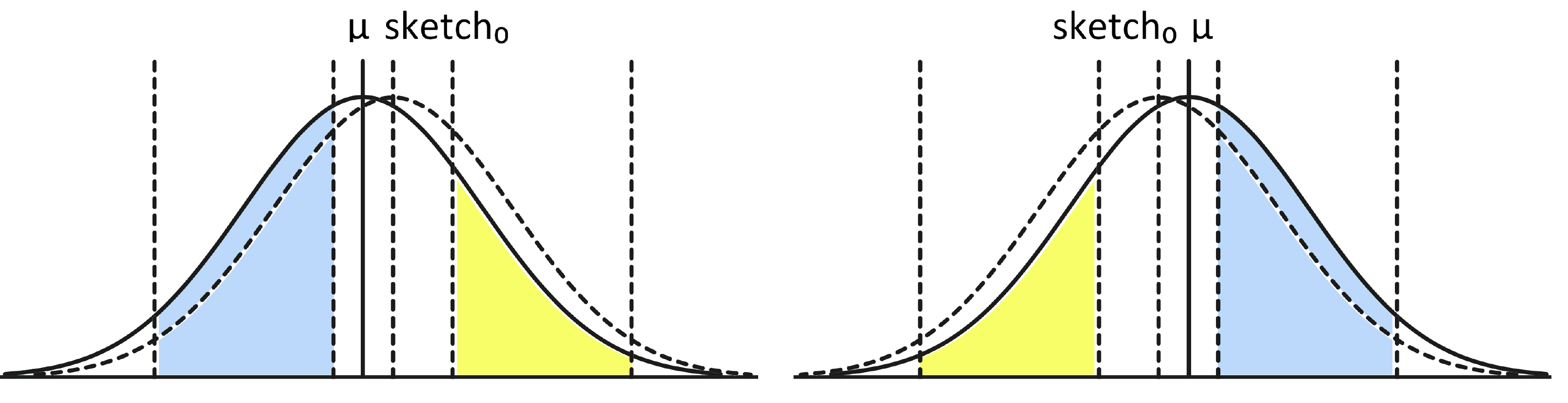}
  \vspace{-0.6em}
  \caption{Deviation of $sketch_0$. Solid lines: real distributions; dotted lines: estimated distributions;
  Shadows: number of data in \textsf{S} and \textsf{L}.}\label{fig: bias}
  \vspace{-1em}
\end{figure}

The quality of $\alpha$ has great influence over the aggregation answers, and the difficulty lies in that the actual conditions of the samples should be considered when deciding $\alpha$.
\vspace{-0.2em}
\section{Modulations}\label{sec: modulationStrategy}
A proper $\alpha$ is crucial to the quality of the aggregation answers, thus an intelligent mechanism is in great demand to determine $\alpha$. In our approach, modulations are processed according to the actual conditions of the samples to compute a good $\alpha$.



As already discussed, $\alpha$ relies on the actual conditions and can be hardly directly computed.
To obtain a proper $\alpha$ to achieve a high-quality leverage-based aggregation answer, we adopt the methodology proposed in Section~\ref{sec: basicidea}.
We adopt the leverage-based estimator $l$-$estimator$ and the sketch estimator $sketch$, and modulate them in the directions of $\mu$ to gradually increase the precision.  
As discussed in Section~\ref{sec: basicidea}, the deviations of the estimators are evaluated, and the estimator with more deviation from $\mu$ is modulated more in each iteration. When they are approximately equal to each other, they are both approximately arrived at the accurate value $\mu$, and a proper answer, as well as a good $\alpha$, is obtained.
To process the iterative modulations, we construct an objective function with leverages and samples, which avoids calculating the leverages while sampling and does not require recording samples, leading to our approach insensitive to the sampling sequences.

In this section, we discuss the objective function, illustrate how to evaluate the deviations of the \emph{l-estimator} and $sketch$, generate the modulation strategies according to the actual conditions of data, and finally, discuss how to determine the modulation step lengths.

\vspace{-0.2em}
\subsection{Function Construction}
We construct an objective function through a subtraction of the \emph{l-estimator} and \emph{sketch}. 
According to Section~\ref{sec: basicidea}, the optimization goal is the function value approaching 0 with the estimators evolving, modulated towards $\mu$.

We denote the value of the \emph{l-estimator} by $\hat{\mu}$, and set the initial value of $sketch$ to $sketch_0$, which is calculated in the Pre-estimation module.
We generate $\hat{\mu}$ with samples, leverages, $\alpha$, and the uniform sampling probabilities, and generate $sketch$ by modulating $sketch_0$.
To evaluate the deviation between these two estimators, we construct an objective function by subtracting $sketch$ from $\hat{\mu}$.
From the discussions in Section~\ref{sec: leveragestrategy}, only the \textsf{S} and \textsf{L} data are involved in the computing, while other data are directly discarded. For the \textsf{S} and \textsf{L} samples in an aggregation work, suppose $|\textsf{S}|$ = $u$ and $|\textsf{L}|$ = $v$. We use the \textsf{S} and \textsf{L} samples to generate the leverage-based answer $\hat{\mu}$, and the following theorem holds.

\begin{theorem}[The Leverage-based Answer]\label{th: function}
\vspace{-0.2em}
Denote the set of the \textsf{S} samples as $X$=$\{x_i\}_{i=1}^{u}$,
and the set of the \textsf{L} samples as $Y$=$\{y_j\}_{j=1}^v$. The leverage-based answer $\hat{\mu}$ is computed with a function of $\alpha$:
\vspace{-0.5em}
$$\hat{\mu}=f(\alpha)=k \alpha + c, $$
where $k$=$(\frac
 {(\sum x_i^2 + \sum y_j^2)\sum x_i-\sum x_i^3}
 {(1+\frac{v}{qu})(u(\sum x_i^2 + \sum y_j^2)-\sum x_i^2)}+\frac{v\sum y_j^3}{(qu+v)\sum y_j^2})
 -\frac{1}{u+v}(\sum x_i + \sum y_j)$, and $c$=$\frac{1}{u+v}(\sum x_i + \sum y_j)$.
\end{theorem}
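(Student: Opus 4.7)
The plan is to simply substitute the definitions of the normalized leverages into the probability formula $prob_i=\alpha\,lev_i+(1-\alpha)/m$ from Eq.~(\ref{eq: probability}), accumulate $\sum prob_i\cdot a_i$ over the $\textsf{S}$ and $\textsf{L}$ samples, and group the result as a linear function of $\alpha$. Since the contribution of $\alpha$ to each probability is linear and the leverages do not themselves depend on $\alpha$, the dependence of $\hat{\mu}$ on $\alpha$ must be linear; the whole proof reduces to identifying the coefficient $k$ and the constant $c$.

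First I would split the sum: $\hat{\mu}=\alpha\bigl(\sum_{i=1}^{u}lev^{\textsf{S}}_i x_i+\sum_{j=1}^{v}lev^{\textsf{L}}_j y_j\bigr)+\frac{1-\alpha}{u+v}(\sum x_i+\sum y_j)$, and re-collect so that the constant term $c=(\sum x_i+\sum y_j)/(u+v)$ appears standalone and $k$ equals the leverage-weighted sum minus $c$. This already matches the subtracted tail $-(\sum x_i+\sum y_j)/(u+v)$ in the theorem statement, so the remaining task is to evaluate the two leverage-weighted sums explicitly.

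Next I would compute the normalization factors $fac_{\textsf{S}}$ and $fac_{\textsf{L}}$ from the rules in Section~\ref{sec: scale}. With $T=\sum x_i^2+\sum y_j^2$, the raw leverage scores $1-h_i$ on $\textsf{S}$ and $h_j$ on $\textsf{L}$ sum to $u-\sum x_i^2/T$ and $\sum y_j^2/T$ respectively. By Theorem~\ref{th: levsum} combined with Constraint~2, the theoretical leverage sums are $qu/(qu+v)$ and $v/(qu+v)$. Dividing raw sums by theoretical sums gives $fac_{\textsf{S}}=\frac{(uT-\sum x_i^2)(qu+v)}{T\cdot qu}$ and $fac_{\textsf{L}}=\frac{\sum y_j^2\,(qu+v)}{T\cdot v}$. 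Dividing each raw leverage by its factor and multiplying by the sample value then yields $\sum_i lev^{\textsf{S}}_i x_i=\frac{qu(T\sum x_i-\sum x_i^3)}{(qu+v)(uT-\sum x_i^2)}$ and $\sum_j lev^{\textsf{L}}_j y_j=\frac{v\sum y_j^3}{(qu+v)\sum y_j^2}$. Rewriting $qu/(qu+v)=1/(1+v/(qu))$ in the first fraction reproduces exactly the bracketed expression in the statement of $k$.

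The only genuine obstacle is the algebraic bookkeeping of the two normalization factors: a sign or factor slip in $fac_{\textsf{S}}$ (which mixes $T$, $u$, $q$, and $\sum x_i^2$) would propagate through the final formula. To guard against this I would verify the derivation on the numerical worked example right before Example~1 (with $u=2$, $v=1$, $q=1$, $T=105$, $fac_{\textsf{S}}=169/70$, $fac_{\textsf{L}}=64/35$), whose $\textsf{NorLev}$ column of Table~\ref{tb: example} provides an independent check that the derived leverage expressions are correct.
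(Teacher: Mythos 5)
Your proposal is correct and follows essentially the same route as the paper's own proof in the appendix: assign the raw leverage scores $1-h_i$ and $h_j$, form the normalization factors as raw sum over theoretical sum (with the theoretical sums $qu/(qu+v)$ and $v/(qu+v)$ coming from Theorem~\ref{th: levsum} plus Constraint~2), normalize, plug into Eq.~(\ref{eq: probability}), and collect the linear coefficient of $\alpha$; your expressions for $fac_{\textsf{S}}$, $fac_{\textsf{L}}$, and the two leverage-weighted sums agree exactly with the paper's. The numerical cross-check against Table~\ref{tb: example} is a sensible addition not present in the paper, and it also happens to catch that the appendix's closing line inverts $c$ as $\frac{u+v}{\sum x_i+\sum y_i}$ — your form, matching the theorem statement, is the correct one.
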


We denote the difference between $\hat{\mu}$ and $sketch$ by $D$. According to THEOREM~\ref{th: function}, there exists 
\vspace{-0.2em}
\begin{equation}\label{eq: functionD}
D=\hat{\mu}-sketch = f(\alpha)-sketch=k \alpha+c-sketch
\vspace{-0.2em}
\end{equation}


Initially, $\alpha$ is set to 0. From THEOREM~\ref{th: function} we know that the initial value of $\hat{\mu}$ is $c$, which also stands for the aggregation answer calculated with the \textsf{S}, \textsf{L} samples and the uniform sampling probabilities, without leverages.

Note that the parameters in $D$ ($i.e.$, $k$ and $c$) are computed with $u$ and $v$, as well as the sum, square sum, cube sum of the \textsf{S} and \textsf{L} samples, and these variables can be computed while sampling. It indicates that the storage space for samples is totally unnecessary. Meanwhile, leverages are not directly calculated while sampling, leading to our approach insensitive to the sampling sequences.

According to $D$, the $l$-$estimator$ and $sketch$ are iteratively modulated approaching the directions of $\mu$, respectively, and the precision gradually increases.
\vspace{-0.2em}
\subsection{Deviation Evaluation}\label{sec: deviationevaluation}
We now discuss how to evaluate the deviations of $sketch$ and $c$ (the initial values of $\hat{\mu}$). We obtain two indicators for further processing. One is the relation among $sketch_0$, $c$, and $\mu$, which reveals the modulation directions of $\alpha$ and $sketch$. The other is the estimators' deviation conditions from $\mu$, \emph{i.e.}, which estimator is farther from $\mu$. Based on these indicators, the modulations are processed on \emph{l-estimator} and $sketch$.

Due to the symmetry of normal distributions, we could evaluate the deviation between $sketch_0$ and $\mu$ from the relation of $|$\textsf{S}$|$ and $|$\textsf{L}$|$, where in ideal conditions $|$\textsf{S}$|$=$|$\textsf{L}$|$. We can also evaluate the deviation between $c$ and $sketch_0$ through the initialization of the objective function $D$, and finally infer the relation of $sketch_0$, $c$, and $\mu$ according to the results of the former two steps.

\noindent
\underline{\emph{(1) The relation between $sketch_0$ and $\mu$.}}
The relation of $|$\textsf{S}$|$ and $|$\textsf{L}$|$ is evaluated to obtain the relation between $sketch_0$ and $\mu$, for the deviation of $sketch_0$ leads to a difference between the numbers of the data in \textsf{S} and \textsf{L}, as shown in Fig.~\ref{fig: bias}.


The \textsf{S} and \textsf{L} regions are defined by the data boundaries generated with $sketch_0$. Under ideal conditions, when $sketch_0$ is accurate, $|$\textsf{S}$|$=$|$\textsf{L}$|$, due to the symmetry of the \textsf{S} and \textsf{L} data in the distribution.
However, in practice, $sketch_0$ has a deviation from $\mu$, leading to $|\textsf{S}|$$\neq$$|\textsf{L}|$. Thus, according to the relation between $|\textsf{S}|$ and $|\textsf{L}|$, we could evaluate the deviation between $sketch_0$ and $\mu$.
$|$\textsf{S}$|\textgreater|$\textsf{L}$|$ indicates $sketch_0$$>$$\mu$ (as shown in the left of Fig.~\ref{fig: bias}), where $sketch_0$ should be increased; $|$\textsf{S}$|\textless |$\textsf{L}$|$ indicates $sketch_0$$<$$\mu$ (as shown in the right of Fig.~\ref{fig: bias}), where $sketch_0$ should be decreased.

\noindent
\underline{\emph{(2)The relation between $c$ and $sketch_0$.}}
To determine the modulation direction of $\alpha$, we evaluate the difference between $sketch_0$ and $c$ by initializing $D$. The initial value of $\hat{\mu}$ is $c$, and $\hat{\mu}$ is modulated through $\alpha$.
We denote the initial function value of $D$ by $D^0$. According to Eq.~(\ref{eq: functionD}), $D^0$=$c-sketch_0$, which reveals the relation between $c$ and $sketch_0$. $D^0$$>$0 indicates $c$$>$$sketch_0$; otherwise, $c$$<$$sketch_0$.


\noindent\underline{\emph{(3)The relation between $sketch_0$, $c$, and $\mu$.}}
We now obtain the relation of $sketch_0$ and $\mu$, and the relation of $sketch_0$ and $c$. We combine them together to obtain the relation of
$sketch_0$, $c$, and $\mu$, which reveals the modulation directions of $\alpha$ and $sketch$, as well as the relation of their modulation step lengths.
\vspace{-0.2em}
\subsection{Modulation Strategies}
\label{sec: modulationStrategies_Condisitons}
We previously discuss the deviation evaluation of the estimators and obtained the relations between $sketch_0$, $c$, and $\mu$. We now illustrate how to use such relations to generate different modulation strategies according to the different conditions of the samples.

The modulation strategies include the modulation directions for $\alpha$ and $sketch$ (to increase, or decrease), and the relations of the modulation step lengths (for $\hat{\mu}$ and $sketch$, which one is modulated more in each  iteration).
Suppose the modulation step lengths of $\alpha$ and $sketch$ are $\delta \alpha$ and $\delta sketch$, respectively, which indicates how much to change in a
round of modulation.
For the ease of discussion, the step lengths are set positive. When $\alpha$ or $sketch$ is to be increased, its step length is added to it; otherwise, subtracted from it.

Different conditions of the samples lead to different relations between $sketch_0$, $c$, and $\mu$, and different modulation strategies are required for $sketch$ and $\alpha$.
Note that although we suppose each block provides unbiased samples for its local data, in practice, unbalanced sampling happens by small probabilities. To fit this scenario, we considered the unbalanced sampling and generate the corresponding modulation strategies.
We obtain the relations of $sketch_0$, $c$, and $\mu$ according to the method proposed in the last subsection, and then derive the relation of $\delta \alpha$ and $\delta sketch$ based on the optimal goal of the iteration ($D$$\rightarrow$0). We now discuss different cases and the corresponding modulation strategies as follows.
%
%
%

\noindent
\textbf{Case 1:} $D^0\textless0$, $|\textsf{S}|\textless|\textsf{L}|$: $c\textless sketch_0 \textless \mu$; $k\delta\alpha\textgreater\delta sketch$.

Since both $c$ and $sketch_0$ are smaller than $\mu$, they both should be increased, thus
$D$=$k(0$+$\delta\alpha$)+$c$-$(sketch_0$+$\delta sketch)$$\rightarrow$0, leading to $k \delta \alpha \textgreater \delta sketch$.
In this case, unbalanced sampling happens. Since $sketch_0\textless\mu$ and $|\textsf{L}|\textgreater|\textsf{S}|$, as shown in the right of Fig.~\ref{fig: bias}, $c$ should be on the right of $sketch_0$. However, contradiction exists since $c\textless sketch_0$, indicating an unbalanced sampling. This seldom happens. Both $sketch$ and $c$ increase, and $c$ increases more to balance the bias.

\noindent
\textbf{Case 2:} $D^0 \textless 0$, $|\textsf{S}|$$\textgreater$$|\textsf{L}|$: $c, \mu \textless sketch_0$; $k\delta\alpha$+$\delta sketch \textgreater 0$.

We increase $c$ and decrease $sketch_0$. Thus, $D$=$k(\delta\alpha$+$0)$+$c$-$(sketch_0$-$\delta sketch)$$\rightarrow$$0$,
leading that $k\delta \alpha$+$\delta sketch \textgreater 0$. When $k$$>$0, such a relation always holds; otherwise, $\delta sketch$$\textgreater$$|k\delta \alpha|$. In this case, the relation of $c$ and $\mu$ is unknown, and the modulation direction of $\hat{\mu}$ could not be directly determined. However, uniform sampling probabilities cannot reflect the individual differences in the samples,
which poorly works when compared with the leverage-based probabilities. Meanwhile, unbalanced sampling does not occur, and we do not need a negative $\alpha$ to balance the sampling bias. Therefore, we slightly increase $\hat{\mu}$ for better answers.

\noindent
\textbf{Case 3:} $D^0 \textgreater 0$, $|\textsf{S}|\textless|\textsf{L}|$: $c, \mu \textgreater sketch_0$; $k\delta \alpha \textless \delta sketch$.

In this case both $c$ and $sketch_0$ are increased. Thus, $D = k\delta\alpha$+$c-(sketch_0$+$\delta sketch)\rightarrow 0$, leading to $k\delta \alpha \textless \delta sketch$. The explanations are similar to Case 2.

\noindent
\textbf{Case 4:} $D^0 \textgreater 0$, $|\textsf{S}| \textgreater |\textsf{L}|$: $c \textgreater sketch_0 \textgreater \mu$; $k\delta \alpha \textgreater \delta sketch$.

Both $c$ and $sketch_0$ should be decreased, thus, $D=k(0 - \delta \alpha) + c - (sketch_0 - \delta sketch)\rightarrow 0$,
leading that $k\delta\alpha\textgreater\delta sketch$.
Similar to Case 1, unbalanced sampling also occurs. When we decrease $sketch$, we decrease $k\delta\alpha$ more, and $\alpha$
is negative to balance such unbalanced sampling.

\noindent
\textbf{Case 5:} $|\textsf{S}|$$\approx$$|\textsf{L}|$: return $sketch_0$.

In this case, \textsf{S} and \textsf{L} are approximately balanced, indicating that $sketch_0$ works well as the data division criteria for it is very
close to $\mu$. We do not use any further process, just return $sketch_0$ as the aggregation answer.

In our approach, different modulation strategies are generated according to the actual conditions, where both $\hat{\mu}$ and $sketch$ are modulated approaching to $\mu$ to increase the precision.
\vspace{-0.2em}
\subsection{Step Lengths}\label{sec: steplength}
Based on the modulation strategies of the different conditions above, $\delta\alpha$ and $\delta sketch$ can be determined.
Since long step lengths may lead to missing proper answers, while short step lengths result in a slow convergence, analogous to the gradient descent method \cite{Burges2005Learning}, we developed a self-tuning mechanism for the step lengths to ensure both the accuracy and the convergence speed.

\begin{algorithm}[h]
    \caption{Phase 1: Sampling}
    \small
    \begin{algorithmic}[1]
    \REQUIRE~~\\
        $j$: the block id; $r$: the sampling rate; \\
        $param_{db}$: data boundary information;\\
    \ENSURE~~\\
        $j$: the block id;\\
        $param_\textsf{S}$: $\{$counter, sum, squareSum, cubeSum$\}$;\\
        $param_\textsf{L}$: $\{$counter, sum, squareSum, cubeSum$\}$;\\

    \STATE Initialize $param_S$, $param_L$;\\
    \STATE $m\leftarrow r |B_j|$;   // Calculate the sample size.\\
    \FOR{$i\leftarrow1$ to $m$}
        \STATE Draw a sample $a$;\\
        \STATE Classify $a$;    //$a$ is classified according to $param_{db}$\\
        \IF{$a$ belongs to \textsf{S}}
            \STATE updateParams($a$, $param_S$);\\
        \ENDIF
        \IF{$a$ belongs to \textsf{L}}
            \STATE updateParams($a$, $param_L$);\\
        \ENDIF
        \STATE Drop $a$;\\
    \ENDFOR
    \end{algorithmic}\label{algorithm: lbia-phase1}

    updateParams($a$, $param$)
    \begin{algorithmic}[1]
    \STATE $param$.counter $\leftarrow$ $param$.counter+1;
    \STATE $param$.sum $\leftarrow$ $param$.sum+$a$;
    \STATE $param$.squareSum $\leftarrow$ $param$.squareSum+$a^2$;
    \STATE $param$.cubeSum $\leftarrow$ $param$.cubeSum+$a^3$;

    \end{algorithmic}

\end{algorithm}

We determine the step lengths according to $D$ and set a \emph{convergence speed} $\eta$ $(\eta$$\in$$(0,1))$, where $D$ reduces to $\eta D$ after an iteration. In this paper, we set $\eta$ to 0.5, which means $D$ reduces to half after each iteration.
According to the optimal goal of $D$, we generate the relation among $sketch$, $\delta{sketch}$, $\alpha$, $\delta{\alpha}$, and $D$.
To ensure the relation between $|k\alpha|$ and $\delta sketch$ generated above, we introduce a $\lambda$ (0$<$$\lambda$$<$1) as the \emph{step length factor}. The smaller one of $|k \alpha|$ and $\delta sketch$ is set to the larger one multiplied by $\lambda$.
In this way, we determine the step lengths in the current iteration using $sketch$ and $\alpha$ in the last iteration.

For example, initially, $D$=$c$-$sketch_0$.
In the first iteration, there exists $k(0$$\pm$$\delta{\alpha})$+$c$-$(sketch_0$$\pm$$\delta{sketch})$=$\eta D$, and $min\{|k \alpha|, \delta sketch\}$ = $\lambda$$\cdot$$max\{|k\alpha|$, $\delta sketch\}$. We obtain $\delta{\alpha}$ and $\delta{sketch}$ with these two equations and update $sketch$, $\alpha$, and $D$.
Similarly, the second iteration is then processed with the new $sketch$, $\alpha$, and $D$, etc.

\noindent\textbf{Determination of $\lambda$.}
The deviations of the \emph{l-estimator} and $sketch$ are evaluated to determine $\lambda$. As discussed above, a severe deviation of $sketch_0$ leads to a large difference between $|\textsf{S}|$ and $|\textsf{L}|$ as well as the strong leverage effects on 
\emph{l-estimator}.
According to THEOREM~\ref{th: methodology}, when determining $\lambda$, we should consider the severe deviation of $sketch_0$ and the strong leverage effects, and adopt a different $\lambda$ based on the actual conditions.
However, we introduce the \emph{leverage allocating} parameter $q$ to shrink the severe deviation of $sketch_0$ and the leverage effects in Section~\ref{sec: biasDegreeEvaluation}. Since $q$ is determined according to the actual conditions of the samples, a fixed $\lambda$ is sufficient. 

%
%

\noindent
\textbf{Terminal Condition.}
$D$ is reduced to half after each iteration, hence $D$ is steadily approaching 0 at a high rate of convergence.
We introduce a threshold $thr (thr\textgreater 0)$ for $D$. When $|D|$ is no greater than $thr$, the iteration halts.
\section{Core Algorithm}\label{sec: LBIAAlgorithm}

\vspace{-0.2em}
We introduce the core algorithm of our approach based on the leverage mechanism and the iterative modulation scheme. The algorithm runs in each computing block to compute the partial answer with an evolving $\alpha$ and finally returns a proper aggregation answer of the current block.
Two phases are included, \emph{i.e.}, the sampling phase, and the iteration phase.

\vspace{-0.5em}

\subsection{Phase 1: Sampling}
In the sampling phase, samples are drawn, and then the regions they falling in are determined. Two arrays, $param_\textsf{S}$ and $param_\textsf{L}$, are set to record information of the \textsf{S} and \textsf{L} samples, including the counter, sum, square sum, and cube sum.
Once a sample is picked, if they fall in the \textsf{S} or \textsf{L} region, the corresponding array is updated; Otherwise, the sample is directly discarded, for it does not participate in the computation.
The pseudo code is shown in
Algorithm~\ref{algorithm: lbia-phase1}.

\begin{algorithm}[h]
    \caption{Phase 2: Iteration}
    \small
    \begin{algorithmic}[1]
    \REQUIRE~~\\
        $param_S$: $\{$counter, sum, squareSum, cubeSum$\}$;\\
        $param_L$: $\{$counter, sum, squareSum, cubeSum$\}$;\\
        $sketch_0$: the initial value of the sketch estimator;\\
        $thr$: the threshold parameter for iteration; \\
        $\eta$: the convergence rate;
    \ENSURE~~\\
        j: the block id; $avg$: the aggregation answer;\\

    \IF{($param_S$.counter $\approx$ $param_L$.counter)}
        \STATE $avg \leftarrow sketch_0$;\\
        \RETURN $(j, avg, param_S, param_L)$;
    \ENDIF
    \STATE Construct the objective function $D$;\\
    \STATE Determine modulation strategies;\\
    \STATE $\alpha\leftarrow 0$, $sketch\leftarrow sketch_0$, $d\leftarrow D^0$;\\
    \WHILE{$|d|\textgreater thr$}
        \STATE Calculate $\delta{sketch}$ and $\delta\alpha$;\\        \STATE $d\leftarrow \eta d$, $sketch\leftarrow sketch+\delta{sketch}$,
        $\alpha\leftarrow\alpha+\delta{\alpha}$;\\
    \ENDWHILE
    \STATE $avg\leftarrow k \alpha+c$;\\
    \end{algorithmic}\label{algorithm: lbia-phase2}
\end{algorithm}

Two arrays, $param_\textsf{S}$ and $param_\textsf{L}$, are initialized to record the information for the \textsf{S} and \textsf{L} samples (Line 1). The required sample size in this block is then computed (Line 2). Next, samples are drawn and classified (Line 3-10).
Once a sample $a$ is drawn, it is classified according to the data boundaries (Line 4, 5). If it is \textsf{S} or \textsf{L} data, the corresponding parameters ($param_\textsf{S}$ or $param_\textsf{L}$) are updated, where the algorithm adds 1, $a$, $a^2$, $a^3$ to the counter, sum, square sum, and cube sum, respectively (Line 6-9). The samples are then dropped (Line 10).

\noindent
\textbf{Complexity analysis.} 
According to the previous discussions, this phase requires $O(m)$ time, where $m$ is the sample size.

Instead of recording all the samples, the information of the samples are included in $param_\textsf{S}$ and $param_\textsf{L}$, which are used to compute $k$ and $c$ in the objective function $D$ according to THEOREM~\ref{th: function} later in the next phase.

\vspace{-0.2em}
\subsection{Phase 2: Iteration}
In the iteration phase, modulations are processed iteratively, and a proper aggregation answer is obtained. The pseudo code is shown in Algorithm~\ref{algorithm: lbia-phase2}.

Initially, whether $|\textsf{S}|$ is approximately equal to $|\textsf{L}|$ is evaluated; if $|\textsf{S}|\approx|\textsf{L}|$, $sketch_0$ is directly returned as a proper aggregation
answer of the current block (Line 1-3), for $sketch_0$ is very close to $\mu$; otherwise, the algorithm continues.
The function $D$ is constructed (Line 4), and the modulation strategies for $sketch$ and $\alpha$ are determined (Line 5).
After initialization (Line 6), it processes (Line 7-9): for each iteration, $D$ decreases by a speed of $\eta$, based on which the step lengths, $\delta sketch$ and $\delta \alpha$, are calculated for the current iteration (Line 8); then the parameters are updated for the next iteration (Line 9). When the function value of $D$ is below the threshold $thr$, a good $\alpha$ is obtained, and the aggregation answer of the current block is obtained with this $\alpha$ (Line 10).

\noindent
\textbf{Upper bound for iteration.} As discussed in Section~\ref{sec: steplength}, in each iteration, $D$ is decreased by the speed of $\eta$. When $|D|$ is no more than the threshold $thr$, the iteration halts.
We suppose the iteration time is $t$. There exist $(\frac{1}{2})^t|D^0|$$\leq$$thr$ and $(\frac{1}{2})^{t-1}|D^0|$$\textgreater$$thr$. Thus, $t$=$\lceil
log(\frac{|D^0|}{thr})\rceil$.

\noindent\textbf{Convergency.} As discussed in Section~\ref{sec: steplength}, the modulation step lengths $\delta\alpha$ and $\delta sketch$ are calculated based on the modulation objective (D$\rightarrow$0) and the relation of $c$, $sketch_0$, and $\mu$. Meanwhile, the difference between $\hat{\mu}$ and $sketch$ decreases by a convergence speed of $\eta$ ($\eta$$\in$$(0,1)$) in each iteration, indicating that $D$ is converged to 0.

\noindent
\textbf{Complexity analysis.}
According to previous discussions, the iteration phase is processed in $O(\lceil log(\frac{|D^0|}{thr})\rceil)$.

We use $param_\textsf{S}$ and $param_\textsf{L}$ to construct the objective function $D$ for the iterations, which not only requires no storage space for the sampled data but also makes our approach insensitive to the sampling sequences. Due to the iteration scheme, $\alpha$ is intelligently determined according to the actual conditions, leading to a high accuracy and efficiency.

\section{Extensions}\label{sec: extensions}
\vspace{-0.2em}
Our approach can be extended to fit more scenarios.

\vspace{-0.2em}
\subsection{Extension to Online Aggregation.}
\label{sec: furtherComputation}
Our approach can be extended to the online mode to support further computation after accomplishing the current computation.
In each computing block, $param_S$ and $param_L$ are stored to record the counter, sum, square sum, cube sum of the \textsf{S} and \textsf{L} samples, respectively, instead of storing all the samples. Further computations can be processed based on $param_S$ and $param_L$.
After the current round of computations is accomplished, if users would like to continue computations to obtain an answer with a higher precision, then our system can continue computations based on the data boundaries, $param_S$, and $param_L$, for the information of the previous samples are recorded in $param_S$ and $param_L$.
For \textsf{S} and \textsf{L} samples in the new round of computations, similar updates are applied to the counter, sum, square sum, and cube sum in $param_S$ and $param_L$. Based on $param_S$ and $param_L$, the system processes the iterations to achieve a higher precision.

\vspace{-0.5em}
\subsection{Extension to Other Distributions}\label{sec: otherDistribution}
Our approach is proposed based on normal distributions, since actual data are usually subjected to: 1) normal distributions, 2) similar normal distributions, or 3) distributions generated by superimposing several normal distributions.

Our approach can also handle non-normal distributions due to the leverages, the iteration scheme, and the precision assurance of $sketch_0$.
In our approach, $sketch_0$ is generated as a ``rough picture'' of the aggregation value with a relaxed precision, which provides a constraint for the result. Due to the confidence assurance of $sketch_0$, the final answer could not be far away from it.
Meanwhile, leverages are assigned to samples to reflect their individual differences, which overcomes the flaws of uniform sampling and increases the precision of the $l$-$esitimator$. In iterations, $l$-$estimator$ and $sketch$ are gradually modulated to increase the accuracy.



In Section~\ref{sec: expotherDistribution} we test the validation of our approach on some other distributions, such as, the exponential distribution.
In our approach, we use $|\textsf{S}|$ and $|\textsf{L}|$ to evaluate the deviation of $sketch_0$. When handling some extreme distributions (although we hardly process \textsf{AVG} aggregation on these distributions in practice), such as, $f(x)$=$2^x(x$$\textgreater$0), $|\textsf{S}|$ and $|\textsf{L}|$ dramatically vary with the increase in $x$. In this condition, a loss of accuracy may occur due to the high increasing rate of $f(x)$.

To solve this problem, we can utilize the confidence interval of $sketch_0$ to generate a modulation boundary for the estimators. The confidence interval provides an assurance of $\mu$ in this interval. It also indicates that $\mu$ can hardly be out of the range. However, when a severe difference of $|\textsf{S}|$ and $|\textsf{L}|$ occurs, the computed aggregation answer will be out of the interval due to the strong leverage effects.
Note that such a feature can be used to test whether there is a high increasing (or decreasing) rate of $f(x)$. Moreover, we can evaluate how much the aggregation answer excesses the interval to evaluate the increasing (or decreasing) rate of $f(x)$, then choose different $q$, the leverage allocating parameter, to optimize the leverage effects.

The extreme condition evaluation and a more detailed definition of parameter $q$ will be studied in the future.

\vspace{-0.2em}
\subsection{Extension to \emph{Non-i.i.d.} Data}\label{sec: extension-noniid}
In this paper, we suppose that data in the  blocks are \emph{i.i.d.}. We now consider the local variance of blocks and propose ideas about the \textsf{AVG} aggregation on \emph{non-i.i.d.} blocks. Improvements are mainly from the following two aspects.

\noindent\textbf{Different sampling rates.}
For aggregation on \emph{non-i.i.d.} distributions, to balance the accuracy and efficiency, we consider the local variance of blocks and apply different sampling rates to them.
Inspired by~\cite{Haas2004A}, we apply the blocks, where data show much dispersion (or, variability), with a large sample size to obtain enough information to describe the data distributions. Considering that such dispersion is reflected by $\sigma$, we use $\sigma$ to compute leverages to reflect the local variance in the blocks, and blocks with higher $\sigma$ are applied with higher sampling rates.

For block $B_i$, we denote its leverage as $blev_i$, and its sampling rate is computed with $blev_i$, the overall sampling rate $r$, the data size $M$, and the data size $|B_i|$.
Similarly to the leverages in ~\cite{Ma2013A}, we set the leverage of $B_i$ proportional to $\sigma_i^2$. Since such leverages are directly used in computing the sampling rates, to avoid the sampling rates of 0, we set $blev_i$ as $\frac{1+\sigma^2}{b+\sum_{i=1}^{b}\sigma_i^2}$, and compute the sampling rate of $B_i$ as $rM$$\cdot$$blev_i/|B_i|$.
To calculated $\sigma_i$, in the Pre-estimation module, a small sample set is drawn randomly and uniformly from $B_i$. Meanwhile, the samples from the blocks are collected to generate the overall sampling rate $r$.

\noindent\textbf{Different data boundaries.}
Since the identical data boundaries work poorly for different distributions, for \emph{non-i.i.d.} distributions, in Pre-estimation module, a pilot sample set is drawn in each block to calculate $sketch_0$ and $\sigma$ to generate different data boundaries. Similar iterations are then processed to compute the proper answers in these blocks.
\vspace{-0.6em}
\subsection{Extension to Other Aggregation Functions}
\vspace{-0.2em}
In this paper, we focused on the \textsf{AVG} aggregation, and the \textsf{SUM} can be easily obtained by multiplying the average and the data size $M$.
Meanwhile, the leverage-based approach can also work for other aggregations, such as \textsf{MIN}, \textsf{MAX} and \textsf{GROUP BY}, where different leverages are applied to reflect the individual differences of data to increase the accuracy. We will continue our research on this line of research. Currently the work of extreme value aggregation, \textsf{MAX} and \textsf{MIN}, is in progress, and we now give a brief introduction.


We use a similar framework, and the main differences include 1) the recorded information (only the extreme value is recorded in each block), and 2) the sampling rate, where leverages are used to generate different sampling rates according to the local variance and the general conditions of blocks.

As discussed in Section~\ref{sec: extension-noniid}, the sampling rates are generated based on the local variance. Blocks which exhibit a higher $\sigma$ should be sampled more than blocks with a lower $\sigma$. Meanwhile, considering the particularity of extreme value aggregation, the general conditions of the blocks should also be considered, since data in some specific blocks may be higher or lower than other blocks in general.
We take \textsf{MAX} aggregation for example. The \textsf{MAX} value is more likely to be in the blocks with generally higher values, while it is less possible to be in the blocks with lower values.

Under this condition, only considering the local variance is insufficient when generating the sampling rates. Thus, we will develop a leverage-based sampling rate which considers the local variance and the general conditions of the blocks.
The local variance is reflected by $\sigma$, and the general conditions of the blocks can be described using the average or median, which indicates a general condition of the data in the block. For blocks with generally higher values, larger leverages are assigned to the sampling rate, while blocks with generally lower data are assigned with smaller leverages.

\vspace{-0.3em}
\subsection{Extension to Distributed Systems}
\vspace{-0.2em}
In some scenarios, big data are distributed on multiple machines, $e.g.$, HDFS.
Our approach can be easily extended for distributed aggregation due to the architectural features, Meanwhile, it also provides convenience to deal with big data, for there is no requirement to store the samples.

Distributed aggregation could be implemented by performing sample-based aggregation on each machine and then collecting the partial results.
We use an example to illustrate.
Considering a transnational corporation, massive data are stored distributedly in its subsidiaries all over the world, which brings the requirement of handling big data over its subsidiary corporations.
When processing aggregation, according to our approach, computations are processed in each subsidiary. The center node then collects the partial results to generate the final answer.
\vspace{-0.6em}
\subsection{Extension of Time Constraint}
In some applications, users set the time constraints for the computation, such as \cite{Sidirourgos2011SciBORQ}\cite{AgarwalBlinkDB}. Our system could accomplish aggregation with a time constraint with small adjustments.
According to the workload, the relationship of the sample size and the run time could be obtained, based on which our system calculates the required sample size within the time constraint. The system then generates the precision assurance--the confidence interval--to ensure accuracy.
\vspace{-0.2em}
\section{Experiment Evaluation}\label{sec: experiment}
We conducted extensive experiments to evaluate the performance of our approach (an iterative scheme for leverage-based approximate aggregation, \textsf{ISLA} for short). We first compared \textsf{ISLA} with the uniform sampling method to evaluate the effects of the leverages. Due to the leverages, our approach can achieve high quality answers with a small sample size.
We then tested the impact of the parameters on the performance of our approach.
Next, we compared \textsf{ISLA} with the \emph{measure-biased} technique proposed in \emph{sample+seek}~\cite{sampleseek2016}, the state-of-art approach.
Finally, we evaluated the performance of \textsf{ISLA} on other distributions as well as the real data.

We compared the approximate aggregation answers with the accurate answer to evaluate the quality of the estimate answer.
However, when dealing with big data, it is impractical to compute accurate answers. Therefore, we used synthetic data generated with a determined average $\mu$ as the golden truth.
We generated data in normal distribution $N(\mu, \sigma^2)$ with an accurate average of $\mu$. We then compare the estimated average with $\mu$ to evaluate whether our approach computes a high-quality aggregation answer. Without special illustration, we set $\mu$ to 100 and set $\sigma$ to 20.

\begin{figure}[H]
\makeatletter\def\@captype{figure}
\center
\subfigure[Varing precision]{
    \label{fig:precisionVary}
    \includegraphics[width = 4cm]{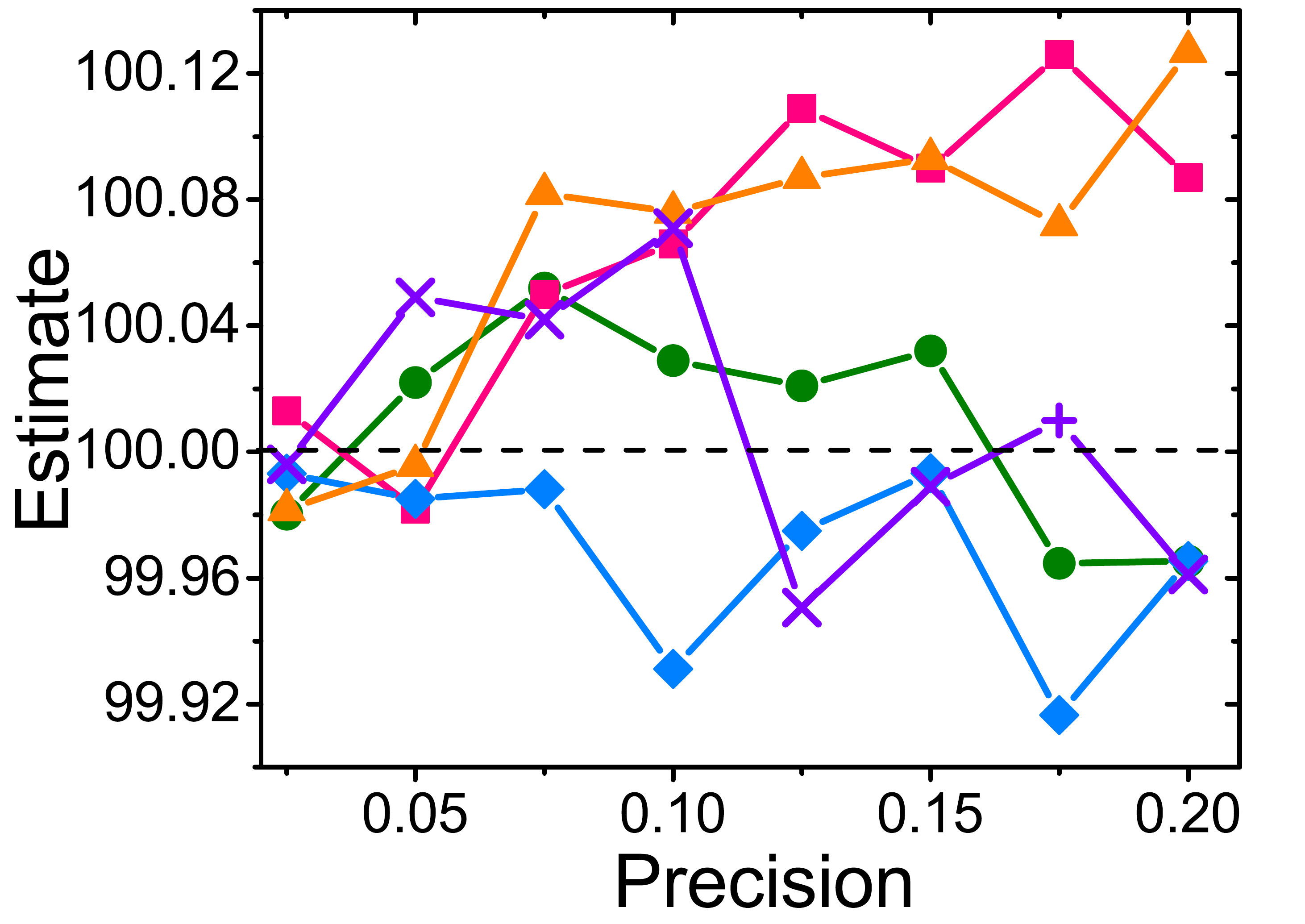}}
\subfigure[Varing confidence]{
    \label{fig:confidenceVary}
    \includegraphics[width = 4cm]{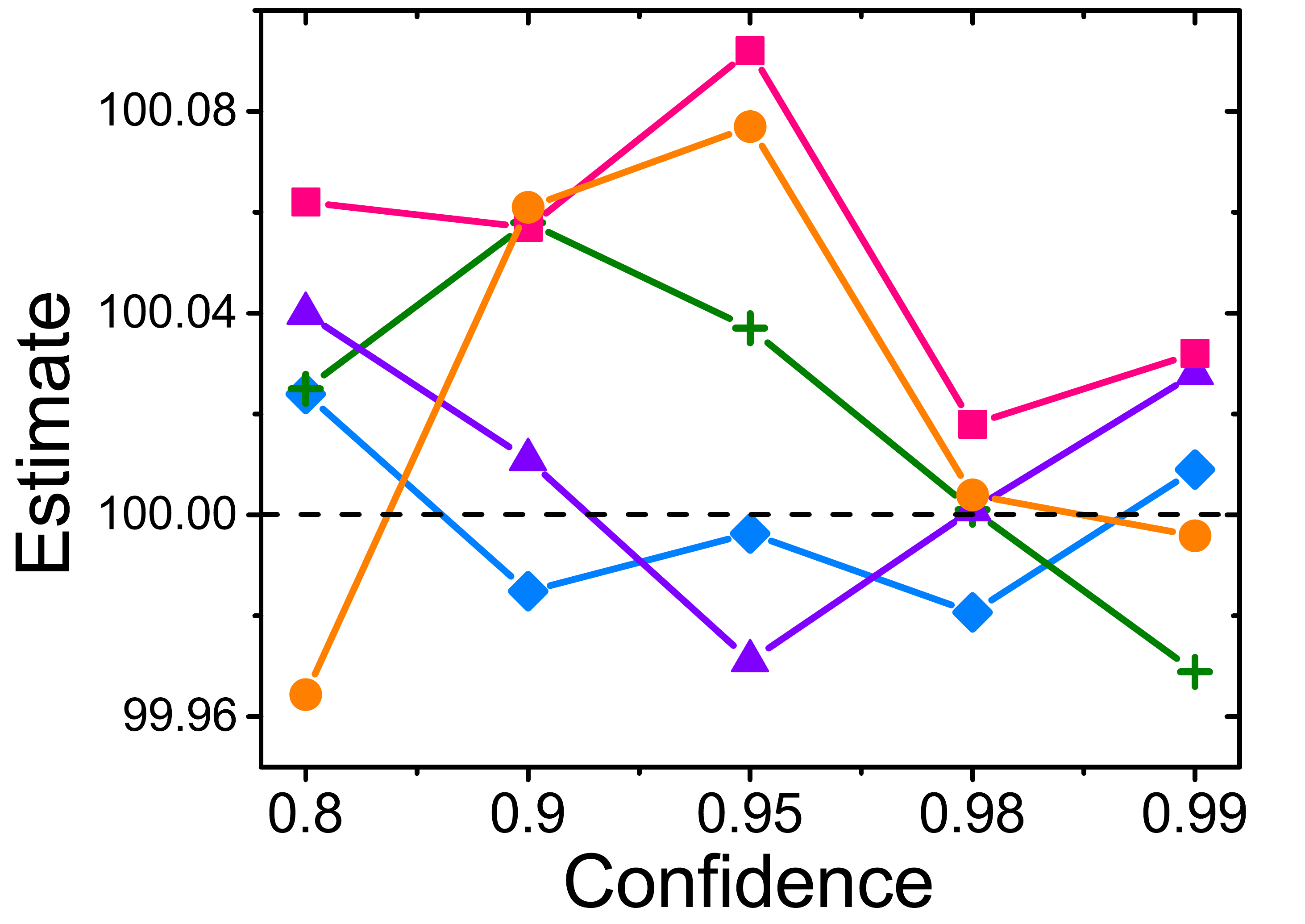}}
\subfigure[Varing number of blocks]{
    \label{fig:blockNumVary}
    \includegraphics[width = 4cm]{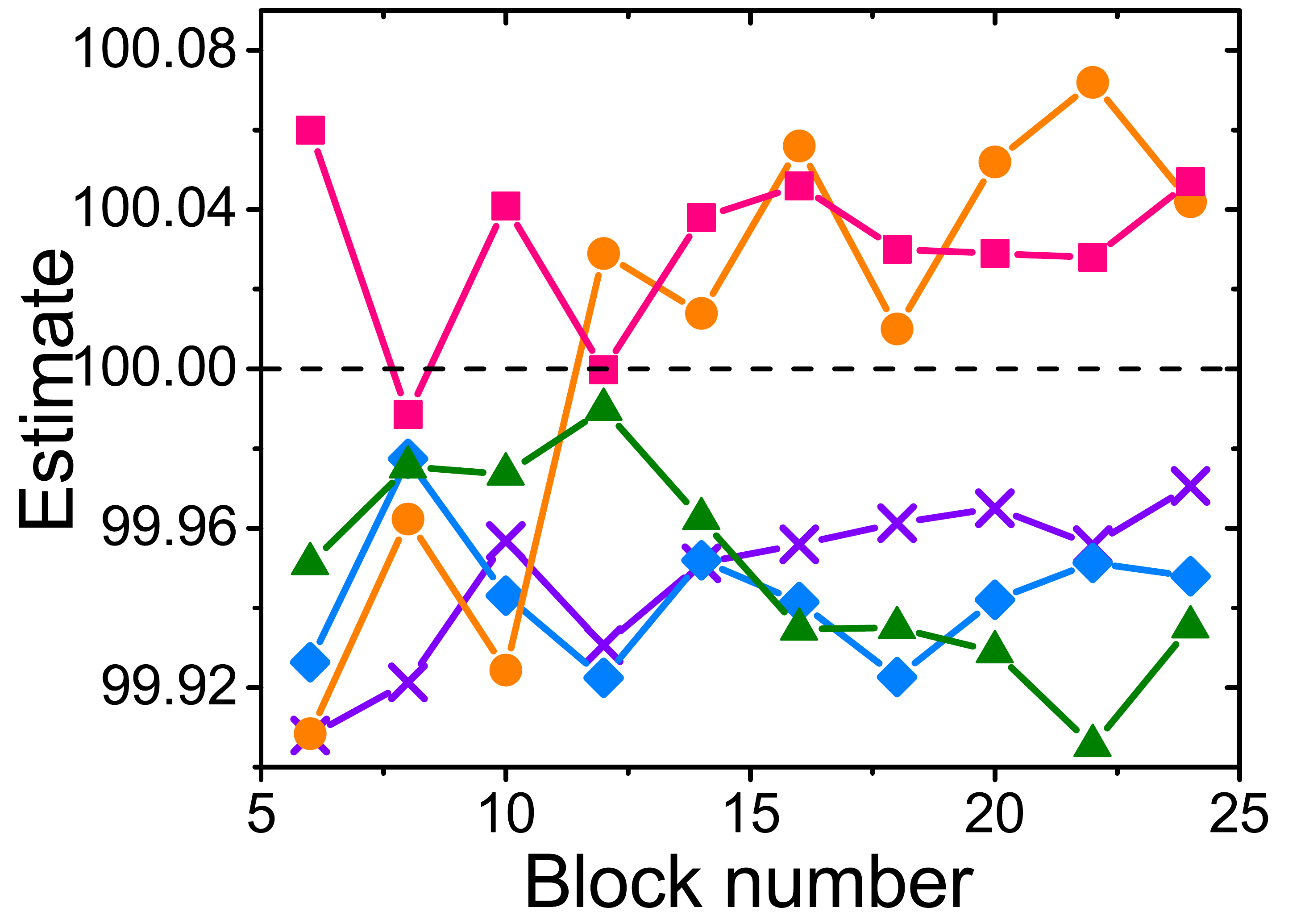}}
\subfigure[Varing data boundaries]{
    \label{fig:dataBoundsVary}
    \includegraphics[width = 4cm]{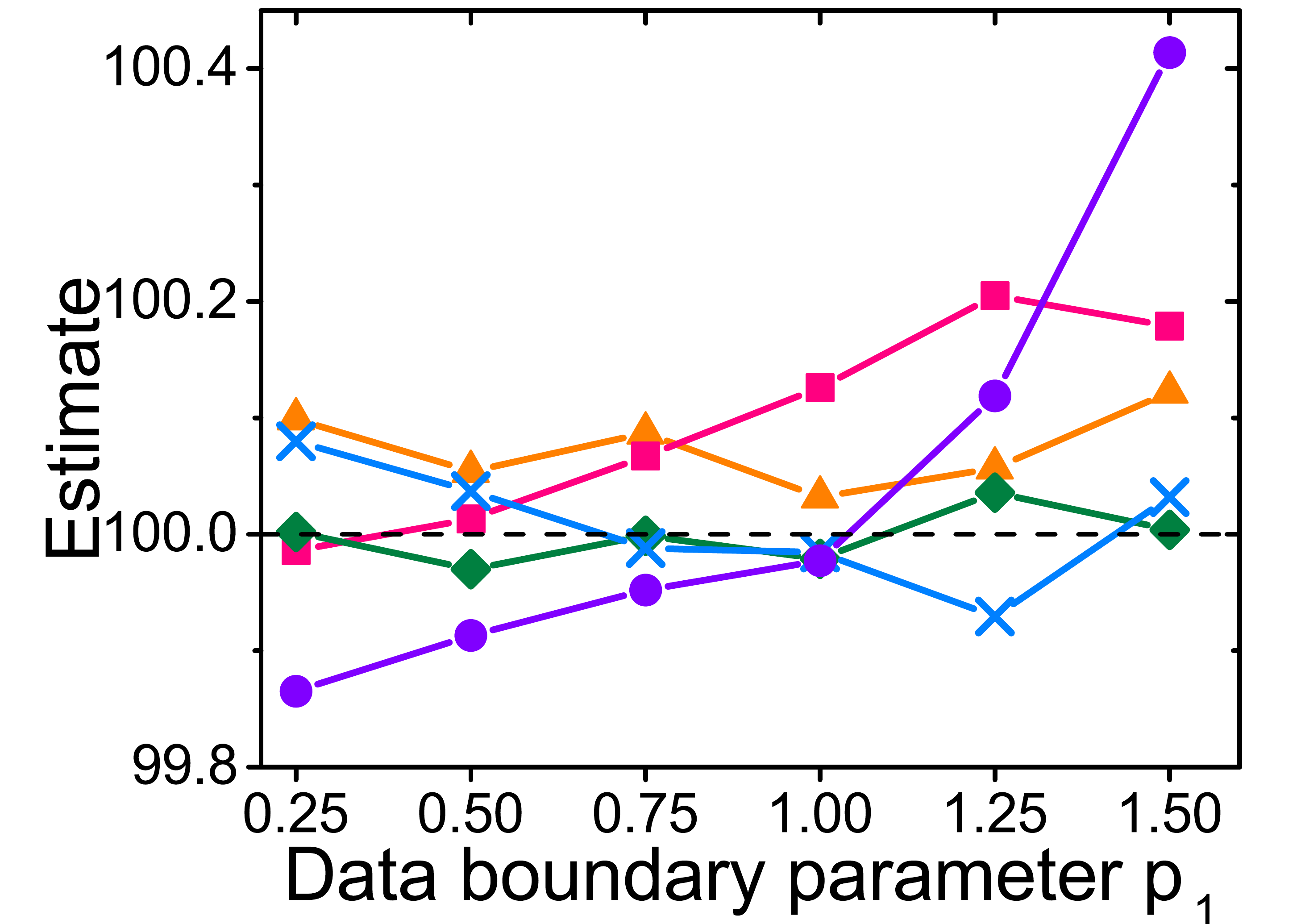}}
    \vspace{-0.5em}
\caption{Impacts of parameters. Five data sets are used. Each line stands for a run. }
\vspace{-1em}
\end{figure}

\noindent
\textbf{Platform.} Our experiments were  performed using a Windows PC of 2.60GHz CPU and 4GB RAM.

\noindent
\textbf{Parameters.} The parameters and the default values are as follows: data size $M$=$10^{10}$, block number $b$=10, desired precision $e$=0.1, confidence $\beta$=0.95, step length factor $\lambda$=0.8,
data boundaries factor $p_1$=0.5 and $p_2$=2.0, and the leverage allocating parameter $q$.
Normally, $q$=1. When the deviation of $sketch_0$ exists, $q$ is generated with $q'$. When $dev$$\in$$(0.94, 0.97)$$\bigcup$$(1.03, 1.06)$, $q'$=5. When $dev$$\in$$(1.06, +\infty)$$\bigcup$$[0, 0.94)$, $q'$=10.


Without special explanations, the sampling rate is determined according to the precision $e$, the confidence $\beta$, and the estimated standard deviation $\sigma$. Data are evenly divided into $b$ parts to process the computations, and are pre-processed and saved in $b$ .txt documents to simulate $b$ blocks.

\begin{table*}
\caption{Comparisons of accuracy. Desired precision: 0.1}
\begin{tabular*}{1\textwidth}{@{\extracolsep{\fill}}llllllllllll}
\hline\noalign{\smallskip}
 \textsf{Dataset} &\textsf{1}&\textsf{2}&\textsf{3}&\textsf{4}&\textsf{5}&\textsf{6}
 &\textsf{7}&\textsf{8}&\textsf{9}&\textsf{10}&\textsf{Average}\\ \hline
\textsf{ISLA} &100.003&
100.003&100.058&
100.064&99.9831&
99.9824&99.995&
100.039&100.076&100.092&100.0296
\\
\textsf{MV}&104.049&
103.96&
104.003&
103.991&
103.958&
104.04&
103.989&
103.997&
104.066&
103.983&104.0036\\
\textsf{MVB} &100.558&
100.472&
100.523&
100.485&
100.471&
100.541&
100.511&
100.51&
100.598&
100.481& 100.515\\
\hline\noalign{\smallskip}
\end{tabular*}
\label{tb: accuracy}
\vspace{-1em}
\end{table*}

\subsection{Impacts of Parameters}\label{sec: ImpactsOfParameter}
We tested the impact of the data size, the required precision, the confidence, the number of blocks, and the data boundaries.

\noindent\textbf{Varying Data Size.}\label{sec: varyingDataSize}
We tested the impact of the data size on the aggregation and demonstrate that our approach can return high-quality answers when processes in a large-scale data processing system.
Data of 100M, 1G, 10G, 100G, and 1TB were generated, with the data sizes of $10^8$, $10^{9}$, $10^{10}$, and $10^{11}$, and $10^{12}$, respectively. The generated data are stored in ``.txt'' files, where each line records a data point. While reading a line, data are handled directly.
We divide the data set into 10 blocks to compute a partial answer. Then the partial answers are collected to generate the final result.
The answers returned by the data of 100M, 1G, 10G, 100G, and 1TB 
are 99.9927, 99.9999, 100.0119, and 100.0035, and 100.0004,
respectively, all satisfying the precision requirement 0.1, indicating that our approach can return high quality answers when dealing with big data (Efficiency of our approach on big data system is evaluated in Section~\ref{sec: efficiency}).
Meanwhile, the returned answers are similar, revealing that the data size has hardly any influence on the aggregation answers.
Actually, according to Section~\ref{sec: sampleRatioEstimation}, the sample size $m$ is only related to $\sigma$, $e$, and $\beta$, suggesting that the data size has hardly any influence on the aggregation answers.


\noindent\textbf{Varying Precision.}
We tested the changing trends in the  aggregation answers with the change in the  desired precision $e$.
We varied the precisions from 0.025 to 0.2. The experimental results are in Fig.~\ref{fig:precisionVary}, which shows that with the increase in the precision, the aggregation results show a trend of divergence.
This indicates that while the precision requirement is relaxed, the accuracy decreases, since the sampling rate is inversely proportional to the value of the desired precision according to Eq.~(\ref{eq: sampleratio}), and lower precision requirements lead to a smaller sampling rate, which produces a decreased precision.

\begin{table*}
\caption{Comparisons of modulation abilities. Desired precision: 0.1}
\vspace{-0.2em}
\begin{tabular*}{1\textwidth}{@{\extracolsep{\fill}}llllllllllll} \hline\noalign{\smallskip}
{\textsf{Partial}}&\textsf{1}&\textsf{2}&\textsf{3}&\textsf{4}&\textsf{5}&
\textsf{6}&\textsf{7}&\textsf{8}&\textsf{9}&\textsf{10}&\textsf{Average}\\
\hline\noalign{\smallskip}
\textsf{ISLA}
&99.9253&
99.9702&
99.9208&
100.065&
100.036&
99.9432&
100.008&
100.193&
99.9573&
100.016&
100.003\\
\textsf{MV}&104.067&
103.949&
104.082&
104.082&
103.987&
104.028&
103.931&
104.117&
104.006&
104.238&
104.049\\
\textsf{MVB} &100.54&
100.499&
100.541&
100.608&
100.496&
100.502&
100.481&
100.654&
100.554&
100.707&100.558\\
\hline\noalign{\smallskip}
\end{tabular*}
\label{tb: modulation}
\vspace{-1.5em}
\end{table*}

\noindent\textbf{Varying Confidence.}
We tested the impact of the confidence $\beta$ using the confidences of 0.8, 0.9, 0.95, 0.98, and 0.99. Experimental results are in Fig.~\ref{fig:confidenceVary}, which shows that with an increase in the confidence, the aggregation answers show a trend of contracting around the accurate value of 100.
This indicates that a higher confidence leads to a better aggregation answer, since the sampling rate increases (according to Eq.~(\ref{eq: sampleratio})), which brings about a more accurate aggregation answer.


\noindent\textbf{Varying Number of Blocks.}
We tested the impact of the number of blocks on aggregation answers.
We generated 5 data sets, varied the number of blocks from 6 to 24, and recorded the aggregation answer of each data set. The results are in
Fig.~\ref{fig:blockNumVary}, which shows that the number of blocks has hardly any influence on the answers. Due to the use of iterations and leverages, high-precision answers are computed according to the actual conditions in each block, leading to the high accuracy of the final aggregation answer.

\noindent\textbf{Varying Data Boundaries.}
We tested the impact of the data boundaries. As discussed in Section~\ref{sec: dataBoundsDefinition}, values out of $(\mu-2\sigma, \mu+2\sigma)$ count for very  limited proportions. Meanwhile, when processing \textsf{AVG} aggregation, they are too far away from the average in distribution, which has limited contributions to the aggregation answers. Thus, we denote such data as outliers in the \textsf{AVG} aggregation and set $p_2$=2. Here we test the impact of $p_1$.
We generated 5 data sets, varying $p_1$ from 0.25 to 1.5, and recorded the aggregation answer of each data set. The results are shown in
Fig.~\ref{fig:dataBoundsVary}.

Fig.~\ref{fig:dataBoundsVary} shows that when $p_1$ is 0.5 or 0.75, \textsf{ISLA} works well. In this condition, the \textsf{S} and \textsf{L} data contain the most featured parts of the normal distributions. Based on such a $p_1$, the \textsf{S} and \textsf{L} data could well predict the distributions.
When $p_1$=0.25, compared with the former condition, more samples are defined as \textsf{S} or \textsf{L} and involved in computing. However, the results are worse than the former condition, since in this condition, the leverages are assigned to more data, leading to stronger leverage effects, which slightly decreases the accuracy.
When $p_1$ gets large, $e.g.$, 1.25 or 1.5, the aggregation answers show a trend of divergence, indicating a low accuracy.
In this condition, $p_1$ is much closer to $p_2$, and the \textsf{S} and \textsf{L} data could not well predict the distributions due to their containing limited features of the distributions.
Besides, fewer samples are used in the computation, which also decreases the accuracy.
In conclusion, we suggest $p_1$ to be 0.5 or 0.75.

\begin{table}
\center
\caption{Evaluation with US and STS. Desired precision: 0.5}
\scriptsize
\begin{tabular}{lllllll}
\hline\noalign{\smallskip}
 \textsf{Data set} &1&2&3&4&5\\
 \hline\noalign{\smallskip}
\textsf{ISLA}
&100.158
&99.8936
&100.136
&99.8917
&100.178\\
\textsf{US}
&99.6591
&99.8918
&99.8675
&99.7068
&99.8371\\
\textsf{STS}
&99.7996
&100.084
&100.261
&99.7332
&99.1607\\
\hline\noalign{\smallskip}
\end{tabular}
\label{tb: levaffect}
\vspace{-2em}
\end{table}

\vspace{-0.5em}
\subsection{Evaluation with Uniform Sampling and Stratified Sampling}\label{sec: experiment4Lev}
\vspace{-0.2em}

We evaluate our approach with uniform sampling and stratified sampling (\textsf{US} and \textsf{STS}). To intuitively show the leverage effects, we set the sampling rate of US as the required sampling rate r, and reduced the sampling rate to r/3 for \textsf{ISLA}. That is, we use only 1/3 of the required sample size, and choose the data in \textsf{S} and \textsf{L} regions to estimate the aggregation answer.
For the convenience of observation, we set the desired precision e to 0.5. 
We generated 5 data sets to conduct the experiments (\textsf{Data set} 1-5), and the results are shown in Table~\ref{tb: levaffect}. The result shows that although \textsf{ISLA} use fewer samples than the \textsf{US} and the \textsf{STS}, all the aggregation answers meet the precision requirement. Moreover, in most of the time, the qualities of the answers calculated by \textsf{ISLA} are even better. This is because our approach introduces leverages to reflect the individual differences in the samples. Due to the leverage effects, our approach achieves high-quality answers with even 1/3 of the required sample size.

\vspace{-0.2em}
\subsection{Evaluation with the Measure-biased Technique}
We compared \textsf{ISLA} with the measure-biased technique in the sample+seek framework~\cite{sampleseek2016}.
The measure-biased technique processes \textsf{SUM} aggregation with the off-line samples, where each data $a$ is picked with the probability proportional to its value:
\vspace{-0.5em}
\begin{equation}\label{eq: measure-biased}
Pr(a)=\frac{a}{\sum_{a'\in A}a'}\sim a.
\vspace{-0.8em}
\end{equation}

Considering that \textsf{AVG} can be computed by dividing the \textsf{SUM} by \textsf{COUNT}, where larger values contribute more to the \textsf{SUM} aggregation answers, we use Eq.~(\ref{eq: measure-biased}) to re-weight the samples in the \textsf{AVG} aggregation.
We also consider the measure-biased technique together with the data division criteria in this paper, and propose another kind of probabilities.
\vspace{-0.2em}
\begin{enumerate}
\item
\underline{\emph{Probabilities on values.}} Probabilities are directly computed with Eq.~(\ref{eq: measure-biased}), proportional to values.

\item \underline{\emph{Probabilities on values and boundaries.}} Probabilities are generated based on the values and data boundaries.
\end{enumerate}
\vspace{-0.2em}
%
%
%

For the second kind of probability, the data are divided into regions according to the data boundaries. Similar to the leverages in Section~\ref{sec: leveragestrategy}, the sum of the probabilities in a specified region is proportional to the number of samples in it. Meanwhile, for samples in a certain region, their probabilities are proportional to their values.
For example, we assume 5 samples are picked with a sum of 100. Two samples, 30 and 35, fall in the region of \textsf{L}. Fors sample 30, its first kind of probability is $\frac{30}{100}$, while its second kind of probability is computed as $\frac{2}{5}$$\times$$100$$\times$$\frac{30}{30+35}$.

In the experiments, we compared \textsf{ISLA} with two measure-biased approaches, the measure-biased approach with probabilities on values (\textsf{MV}), and the measure-biased approach with probabilities on values and boundaries (\textsf{MVB}), to evaluate the accuracy, the modulation effects, and the efficiency of our approach.
%

\noindent\textbf{Accuracy.}\label{sec: accuracyExperiment}
We compared \textsf{ISLA}, \textsf{MV}, and \textsf{MVB} for the accuracy and generated 10 data sets (\underline{\textsf{Dataset 1-10}} in Table~\ref{tb: accuracy}) to run algorithms. The experimental results are shown in Table~\ref{tb: accuracy}.

The average results returned by \textsf{ISLA}, \textsf{MV}, and \textsf{MVB} are 
100.0296,
104.0036, and 100.515, respectively. Only answers calculated by \textsf{ISLA} are satisfied with the desired precision of 0.1.
Meanwhile, detailed answers in Table~\ref{tb: accuracy} indicate that \textsf{ISLA} returns the most robust and high-quality answers when compared with \textsf{MV} and \textsf{MVB}.

\noindent\textbf{Modulation abilities.}
We compared the modulation abilities of \textsf{ISLA}, \textsf{MV}, and \textsf{MVB} to evaluate whether \textsf{ISLA} could properly modulate the sketch estimator in the direction of $\mu$.
We choose the first set of experiments (\underline{\textsf{Dataset 1}}) in Table~\ref{tb: accuracy}, and recorded the partial answers (\underline{\textsf{Partial 1-10}} in Table~\ref{tb: modulation}) to study the modulation process in each block to verify whether \textsf{ISLA} returns better partial results than \textsf{MV} and \textsf{MVB}.
We recorded $sketch_0$, which is $99.676$, and compare $sketch_0$ with the partial results to see whether $sketch_0$ can be properly modulated in each block.
The final answers returned by \textsf{ISLA}, \textsf{MV}, and \textsf{MVB} are 100.003, 104.049, 100.558, respectively.
The experimental results are recorded in Table~\ref{tb: modulation}.
Table~\ref{tb: modulation} shows that the partial results returned by \textsf{ISLA}, with an average of 100.003, are much better, indicating the good modulation abilities of \textsf{ISLA}. Partial results returned by \textsf{MV} and \textsf{MVB} are about 104 and 100.5, respectively, which are both outside of the confidence interval ($sketch_0-0.1, sketch_0+0.1)$, leading to poorer answers.


\vspace{-0.2em}
\subsection{Experiments on \emph{Non-i.i.d.} Distributions}
In Section~\ref{sec: extension-noniid} we extend our approach to process the \textsf{AVG} aggregation on \emph{non-i.i.d.} distributions, and we now test the performance.
We generated 5 blocks in different normal distributions $N(\mu, \sigma^2)$: $N(100, 20^2)$, $N(50, 10^2)$, $N(80, 30^2)$, $N(150, 60^2)$, and $N(120$, $40^2)$, with the data size of $10^8$ in each block.
The accurate answer is 100, which is calculated by dividing the sum of the accurate averages of each block.
The desired precision $e$ is set to 0.5.
We conducted the experiments 5 times. The aggregation answers are 99.8538, 100.066, 100.194, 100.321, and 99.8333, respectively. All the results satisfy the desired precision, indicating that our approach has a good performance for \emph{non-i.i.d.} distributions.
\subsection{Other Distributions}\label{sec: expotherDistribution}
We experimentally show that our method is also suitable for other kinds of distributions.
Similar to the comparison experiments above, we compare \textsf{ISLA} with \textsf{MV} and \textsf{MVB}. Without a specific explanation, the parameters are set to default values.

\begin{table}
\center
\caption{Experiments on exponential distributions}
\begin{tabular}{lllll}
\hline\noalign{\smallskip}
\textsf{$\gamma$}&
\textsf{0.05}&
\textsf{0.1}&
\textsf{0.15}&
\textsf{0.2}\\
\hline\noalign{\smallskip}
\textsf{Accurate}
&20
&10
&6.67
&5\\
\textsf{ISLA}
&19.8713
&9.53488
&6.32677
&4.60377\\
\textsf{MV}
&39.7174
&20.2711
&13.2486
&10.3369\\
\textsf{MVB}
&21.8042
&11.0635
&7.30495
&5.49333\\
\hline\noalign{\smallskip}
\end{tabular}
\label{tb: expExperiment}
\vspace{-1em}
\end{table}

\begin{table}
\center
\caption{Experiments on uniform distributions}
\begin{tabular}{llllll} \hline\noalign{\smallskip}
\textsf{Dataset} &\textsf{1}&\textsf{2}&\textsf{3}&\textsf{4}&\textsf{5}\\
\hline\noalign{\smallskip}
\textsf{ISLA} &99.7658&99.5098&99.5627&99.7011&99.8016\\
\textsf{MV}&132.031&132.046&131.932&132.12&132.06\\
\textsf{MVB} &93.5209&92.8587&93.3415&93.7927&95.3857\\
\hline\noalign{\smallskip}
\end{tabular}
\vspace{-1.5em}
\label{tab: unif}
\end{table}

\noindent\textbf{Exponential Distributions.} We designed our approach based on the symmetry of normal distributions, and we want to test its performance on asymmetrical distributions. Thus, we choose the exponential distribution 
with the probability density function
$f(x)$=$\gamma$$e^{-\gamma x} (x$$>$$0)$,
where the accurate average is $1/\gamma$.
Note that when $\gamma$ increases, $1/\gamma$ decreases, for the convenience of observation and comparison, we vary $\gamma$ from 0.05 to 0.2. We record answers calculated with \textsf{ISLA}, \textsf{MV}, and \textsf{MVB}. The accurate averages $1/\gamma$ are also included as a comparison. The results are recorded in Table~\ref{tb: expExperiment}.
Table~\ref{tb: expExperiment} shows that answers returned by \textsf{ISLA} even outperform the competitors, and \textsf{ISLA} is capable for \textsf{AVG} aggregation on exponential distributions. 



\noindent\textbf{Uniform Distributions.}
We generated random data uniformly from the range [1, 199] 5 times (\underline{\textsf{Dataset 1-5}} in Table~\ref{tab: unif}) to conduct experiments on uniform distributions to compare the robustness of \textsf{ISLA} with \textsf{MV} and \textsf{MVB}. The accurate average is 100. The results are shown in Table~\ref{tab: unif}.
Table~\ref{tab: unif} shows that answers returned by \textsf{MV} are around 132, and answers returned by \textsf{MVB} are from 92.8 to 94.3.
\textsf{ISLA} obviously returns much better results, varying from 99.5 to 99.85, indicating that \textsf{ISLA} is much more robust than the competitors.
Even so, when dealing with such kinds of extreme cases, the accuracy decreases, with the desired precision unsatisfied.
This is because the uniform distribution is an extreme condition of normal distributions with a very large standard deviation $\sigma$, leading to a loss of precision.
An improvement can be performed by increasing the overall sampling rate accordingly, and this is left for further research.


%

\subsection{Evaluation on Large-scale Data Processing}\label{sec: efficiency}
Since it is impractical to compute an accurate answer on big data, in Section~\ref{sec: varyingDataSize}, we generate data of 100M, 1G, 10G, 100G, and 1T to run the algorithm and compare the answers with the accurate average 100.
The results show that our approach can return high-quality answers when dealing with big data.
We also use real-world data of an appropriate data size to conduct experiments in Section~\ref{sec: realDataExp}. And the accurate average could be obtained through a full scan of the data.

To validate the high efficiency of our approach, in this section, we used TCP-H benchmark to produce 100 GB data~\cite{realdataTPCH}, and chose the column LINEITEM to process \textsf{AVG} aggregation. The data size is 600 million.
We compared \textsf{ISLA} with \textsf{MV}, \textsf{MVB}, \textsf{US}, and \textsf{STS} on efficiency measured by run time. Each algorithm is run for 20 times to get a total run time.
The time required to run \textsf{ISLA}, \textsf{MV}, \textsf{MVB}, \textsf{US}, and \textsf{STS} for 20 times are 31979ms, 61718ms, 70584ms, 25989ms, and
84294ms, respectively. Results show that \textsf{ISLA} is more efficient than \textsf{MV}, \textsf{MVB}, and \textsf{STS}, and is a little less efficient than \textsf{US}.
However, according to the former experiments, results returned by \textsf{ISLA} achieve a much higher precision. Furthermore, comparing with \textsf{US} and \textsf{STS}, \textsf{ISLA} returns better results while using a smaller sample size.
In conclusion, \textsf{ISLA} is suitable to deal with big data.

\vspace{-0.2em}
\subsection{Results on Real Data}\label{sec: realDataExp}
We choose the real-world data of an appropriate data size and compare the estimated results with the accurate value to evaluate the quality of answers, for it usually costs more than 3 hours when dealing with big data, which is impractical. With using an appropriate data size, the accurate average could be obtained through a full scan of the data. We conducted experiments on 2 data sets, the salary data~\cite{realdata}, and TLC Trip~\cite{realdataTaxi}. 
We compare our approach with \textsf{MV}, \textsf{MVB}, \textsf{US}, and \textsf{STS}.

\noindent\textbf{Salary data.} The salary data was extracted from the 1994 and 1995 population surveys conducted by the U.S. Census Bureau. The data size is 299285, with an accurate average of 1740.38. 
Data are divided into 10 blocks, and 1000 samples are picked uniformly and randomly from these blocks to generate $sketch_0$. To effectively validate our approach, we set the sample size of \textsf{MV}, \textsf{MVB}, \textsf{US}, and \textsf{STS} to 20000, and set the sample size of \textsf{ISLA} to only 10000.
Answers computed by \textsf{ISLA}, \textsf{MV}, \textsf{MVB}, \textsf{US}, and \textsf{STS} are 1731.48, 2326.78, 1798.78, 1742.79, and 1740.37, respectively.


\noindent\textbf{TLC Trip data.} We used the yellow car data of January, 2016 and choose the column of ``trip\_distance'' to conduct experiments to evaluate the results returned by \textsf{ISLA} with \textsf{MV}, \textsf{MVB}, \textsf{US}, and \textsf{STS}. To directly see the differences of the results, each data values are multiplied by 1000. The data size is 10906858, with an accurate average of 4648.2. While handling the data, we found that the data set is highly-skewed. The too big values and the too small values are highly clustered. So we wonder how our approach performs on such data set.
Answers returned by \textsf{ISLA}, \textsf{MV}, \textsf{MVB}, \textsf{US}, and \textsf{STS} are 4515.73, 7426.37, 3298.09, 2908.53, and 4289.08, respectively.

Results of these two experiments above show that \textsf{ISLA} has much better performance than \textsf{MV}, \textsf{MVB}, \textsf{US}, and \textsf{STS}, indicating \textsf{ISLA} returns high-quality answers.

\section{Related Work}\label{sec: related}
We briefly survey work related to this paper, including sampling strategies, and approximate query processing.
\vspace{-0.5em}
\subsection{Sampling strategies}
\noindent\textbf{Bi-level sampling and block-level sampling.}
Bi-level sampling~\cite{Haas2004A} combines the row-level and page-level sampling to control a tradeoff between efficiency and accuracy. The local variance of blocks is considered to generate different sampling rates, and uniform blocks are sampled less than blocks with large variances.
Block-level sampling~\cite{Chaudhuri2004Effective}
uses fewer blocks and larger sample sizes in these blocks and accesses for the same sample size to reduce IO.
In our approach, to increase the accuracy, we mainly consider the individual differences in the samples instead of the local variance of the blocks.
For the ease of discussion, we assume the data to be identically distributed on the blocks. We also consider the local variance and extend our approach to deal with \emph{non-i.i.d.} data, as discussed in Section~\ref{sec: extension-noniid}.

\noindent
\textbf{Leverage-based sampling.}
The leverage-based sampling technique~\cite{Ma2013A} picks biased samples in the leverage-based probabilities. Leverages are generated in a single way, calculated with the data value as well as all the data. Meanwhile, all the samples are involved in aggregation. Thus, the influence of outliers could not be eliminated.
In our approach, samples are uniformly picked and then re-weighted.
We consider the nature of data and divide the data into regions according to their features and then assign various leverages to handle them differently.
Moreover, we completely avoid the influence of outliers by selecting the \textsf{S} and \textsf{L} samples into the computation, which could reflect the features of the distributions very well, leading to a high accuracy and efficiency when dealing with big data.

\noindent
\textbf{Error-bounded stratified sampling.}
Error-bounded stratified sampling~\cite{error2014vldb} focuses on sparse data and divides data into regions and samples them differently to reduces the sample size, while our approach focuses on the most common distributions (normal distributions, or similar normal distributions), picks uniform samples and handles them differently.
\subsection{AQP (Approximate Query Processing) and AQP++}
\noindent
\textbf{Off-line processing.}
This technique prepares the samples or summaries in advance in order to  execute queries~\cite{Sidirourgos2011SciBORQ}~\cite{AgarwalBlinkDB}~\cite{sampleseek2016}~\cite{AQUA1999SIG}.
SciBORQ~\cite{Sidirourgos2011SciBORQ} picks samples based on previous results, and the bias of samples is considered, where tuples from the areas of interest are more likely to be picked.
BlinkDB~\cite{AgarwalBlinkDB} generates multi-dimensional samples and use dynamic sample selection strategies to provide fast responses.
Sample+seek~\cite{sampleseek2016} classifies queries into large queries and small queries according to the hardness of being answered. A measure-biased technique is proposed to process small queries with the off-line samples, and indexes are provided to aid the off-line samples for large queries.
Aqua~\cite{AQUA1999SIG} computes summary synopses in advance, and approximate answers are generated by rewriting and executing queries over the synopses.
These techniques may either require elaborate off-line processes in advance or depend much on the previous queries, which can be unavailable and less flexible when dealing with new data sets.

Our approach does not require much off-line processing. It just draws a small sample set to get an estimated average as one of the two estimators for the computation. To increase the quality of the results, we also use it to generate the data boundaries. For the ease of illustration, we define this computation into the Pre-estimation Module. Actually, due to the importance of this value in the iteration, it should belong to a part of our iterative algorithm. Meanwhile, to avoid great changes of the data (such as, someone inserts huge amounts of rows into the table), we suggest not draw the pilot samples much ahead of the computation.

\noindent
\textbf{Online aggregation.}
Online aggregation \cite{online1997}\cite{Wu2010Distributed} allows users to observe the promotion of answers and cut off the computation when a desired answer is obtained. 
However, it requires users to observe all the time, leading to a poor user experience. 
Moreover, samples are treated identically without considering their individual differences, leading to a loss of accuracy to some degree. Our approach considers individual differences in the data and directly returns answers with confidence assurance which does not require observation.

\noindent
\textbf{AQP++.}
AQP++ connects aggregate pre-computation and AQP together for interactive analytics~\cite{AQPPP}. Given a table D, a column C, a sample set S, and a pre query $pre$: select $f(A)$ from $D$ where $x_0 \leq y_0$, to execute query $q$: select $f(A)$ from $D$ where $x\leq y$, it uses the equation $q(D) - pre(D) \approx \hat{q}(S) - \hat{pre}(S)$. AQP++ is mainly for range queries, while our approach focuses on computing high-quality answers of a given data set. Considering of this, we could combine our approach with AQP++ to process high-quality aggregation computation for range queries.

\vspace{-0.2em}
\section{Conclusion}\label{sec: conclusion}
In this paper, we propose an effective approach to calculate high-accuracy aggregation answers with only a small portion of the data.
To increase the accuracy, the leverage mechanism is introduced to reflect individual differences among the samples.
Two estimators, the \emph{l-estimator} and the sketch estimator, are built, which are gradually improved based on their relations until the difference is below a threshold.
This process is done over multiple iterations, using multiple modulation strategies according to the actual conditions of the data. Based on the iterations, proper aggregation answers are obtained.
Without requiring storage for the samples, our algorithm achieves a high efficiency and works well when dealing with big data.
More analysis and experiments for extensions, such as extreme value aggregation, are left for future work.

\vspace{-0.2em}

\section{appendices}
%

\subsection{Proof of Theorem~\ref{th: function}}
The leverage-based answer $\hat{\mu}$ is generated as follows.

\noindent
\textbf{1. Leverage assignment.} Initially, the original leverages are assigned to the \textsf{S} and \textsf{L} samples.
For $\forall$$x$$\in$$X$, the original leverage is $1-\frac{x^2}{\sum x_i^2 + \sum y_i^2}$; for $\forall$$y$$\in$$Y$, the original leverage is $\frac{y^2}{\sum x_i^2 + \sum y_i^2}$.

\noindent
\textbf{2. Normalization factor calculation.} We get the normalization factors $fac$ for \textsf{S} and \textsf{L},  respectively, by dividing the sum of the leverage scores by the theoretical sum of the leverages.
For \textsf{S} samples, $fac_x$ = $(u+\frac{v}{q})$ $(1-\frac{\sum x_i^2}{u(\sum x_i^2 + \sum y_i^2)})$; for \textsf{L} samples, $fac_y$ = $(q\frac{u}{v}+1)$ $(\frac{\sum y_i^2}{\sum x_i^2 + \sum y_i^2})$.

\noindent
\textbf{3. Leverages normalization.}
The leverages of the \textsf{S} and \textsf{L} samples are calculated by dividing the original leverages by $fac_x$ and $fac_y$.
For $\forall$ $x$ $\in$ $X$, $lev_x$ = $\frac{1-{x^2}}{fac_x(\sum x_i^2 + \sum y_i^2)}$;
for $\forall$ $y$ $\in$ $Y$, $lev_y$ = $\frac{y^2}{fac_y({\sum x_i^2 + \sum y_i^2})}$.

\noindent
\textbf{4. Re-weighted probability generation.} The probabilities of the samples are generated according to Eq.~(\ref{eq: probability}), with the uniform sampling probability equal to
$\frac{1}{u+v}$. For $\forall$$x$$\in$$X$, $prob_x$ = $\alpha lev_x$ + $\frac{1-\alpha}{u+v}$; for $\forall$ $y$ $\in$ $Y$, $prob_y$ = $\alpha  lev_y$ + $\frac{1-\alpha}{u+v}$.

\noindent
\textbf{5. L-estimator generation.}
The value of the $l$-$estimator$, $\hat{\mu}$, is computed as $\hat{\mu}$=$\sum x \cdot prob_{x}$+$\sum y \cdot prob_{y}.$
After putting terms related to $\alpha$ together and accumulating the coefficients, the coefficient of $\alpha$ is obtained, and the function
of $\alpha$ is derived: $\hat{\mu}$=$f(\alpha)$=$k \alpha + c$,
 where $k=(\frac
 {(\sum x_i^2+\sum y_i^2)\sum x_i-\sum x_i^3}
 {(1+\frac{v}{qu})(u(\sum x_j^2 + \sum y_j^2)-\sum x_i^2)}$ + $\frac{v\sum y_i^3}{(qu+v)\sum y_j^2})-\frac{u+v}{\sum x_i + \sum y_i}$, and $c$ = $\frac{u+v}{\sum x_i + \sum y_i}$.

\end{document}